\title{\LARGE \bf
Learning Performance Bounds for Safety-Critical Systems
}
\author{Prithvi Akella, Ugo Rosolia, and Aaron D. Ames
}
\begin{document}

\maketitle
\thispagestyle{empty}
\pagestyle{empty}

%%%%%%%%%%%%%%%%%%%%%%%%%%%%%%%%%%%%%%%%%%%%%%%%%%%%%%%%%%%%%%%%%%%%%%%%%%%%%%%%
\begin{abstract}
As the complexity of control systems increases, the need for systematic methods to guarantee their efficacy grows as well.  However, direct testing of these systems is oftentimes costly, difficult, or impractical. As a result, the test and evaluation ideal would be to verify efficacy of a system simulator and use this verification result to make a statement on true system performance.  This paper formalizes that performance translation for a specific class of desired system behaviors.  In that vein, our contribution is twofold.  First, we detail a variant on existing Bayesian Optimization Algorithms that identifies minimal upper bounds to maximization problems, with some minimum probability.  Second, we use this Algorithm to $i)$ lower bound the minimum simulator robustness and $ii)$ upper bound the expected deviance between true and simulated systems.  Then, for the specific class of desired behaviors studied, we leverage these bounds to lower bound the minimum true system robustness, without directly testing the true system.  Finally, we compare a high-fidelity ROS simulator of a Segway, with a significantly noisier version of itself, and show that our probabilistic verification bounds are indeed satisfied.
\end{abstract}

%%%%%%%%%%%%%%%%%%%%%%%%%%%%%%%%%%%%%%%%%%%%%%%%%%%%%%%%%%%%%%%%%%%%%%%%%%%%%%%%
\section{INTRODUCTION}
The notion of and development of controllers for \textit{safety-critical systems} has seen a tremendous rise in importance in the recent past.  Succinctly, the notion refers to the control of systems where safety is of paramount importance, \textit{e.g.} autonomous vehicles, surgical robotics, robots that have to interact with humans, \textit{etc}.  However, as is commonly the case in control development, the controller is first developed with respect to a model/simulator of the control system at hand and then deployed on the real system.  As ensuring system safety is of paramount importance however, this naturally begs the question, \textit{how do we verify true system safety}?

This question is the subject of widespread study in the Test and Evaluation community~\cite{Althoff2018, Koschi2019, Wheeler2019, tuncali2016utilizing, fainekos2012verification}.  To frame this question, desired system behaviors are oftentimes expressed as temporal logic specifications~\cite{baier2008principles,corso2020survey}.  The Test and Evaluation goal is then to iteratively develop more difficult tests of system behavior until either the system has been verified, or a failure is found - such failures are termed counterexamples.  Indeed there exist state of the art software to perform this counterexample search for system simulators~\cite{annpureddy2011s,donze2010breach,dreossi2019verifai}.  Additionally, this counterexample search is oftentimes phrased as an optimization problem - see Section III in~\cite{corso2020survey}.  This has prompted the study of how specific optimization procedures - Bayesian Optimization in particular - can provide probabilistic verification results for system simulators~\cite{ghosh2018verifying,gangopadhyay2019identification,deshmukh2017testing}.  Indeed Bayesian Optimization has also been useful for iterative control development as well~\cite{berkenkamp2016bayesian,berkenkamp2017safe,berkenkamp2015safe}.

\begin{figure}
    \centering
    \includegraphics[width = 0.48\textwidth]{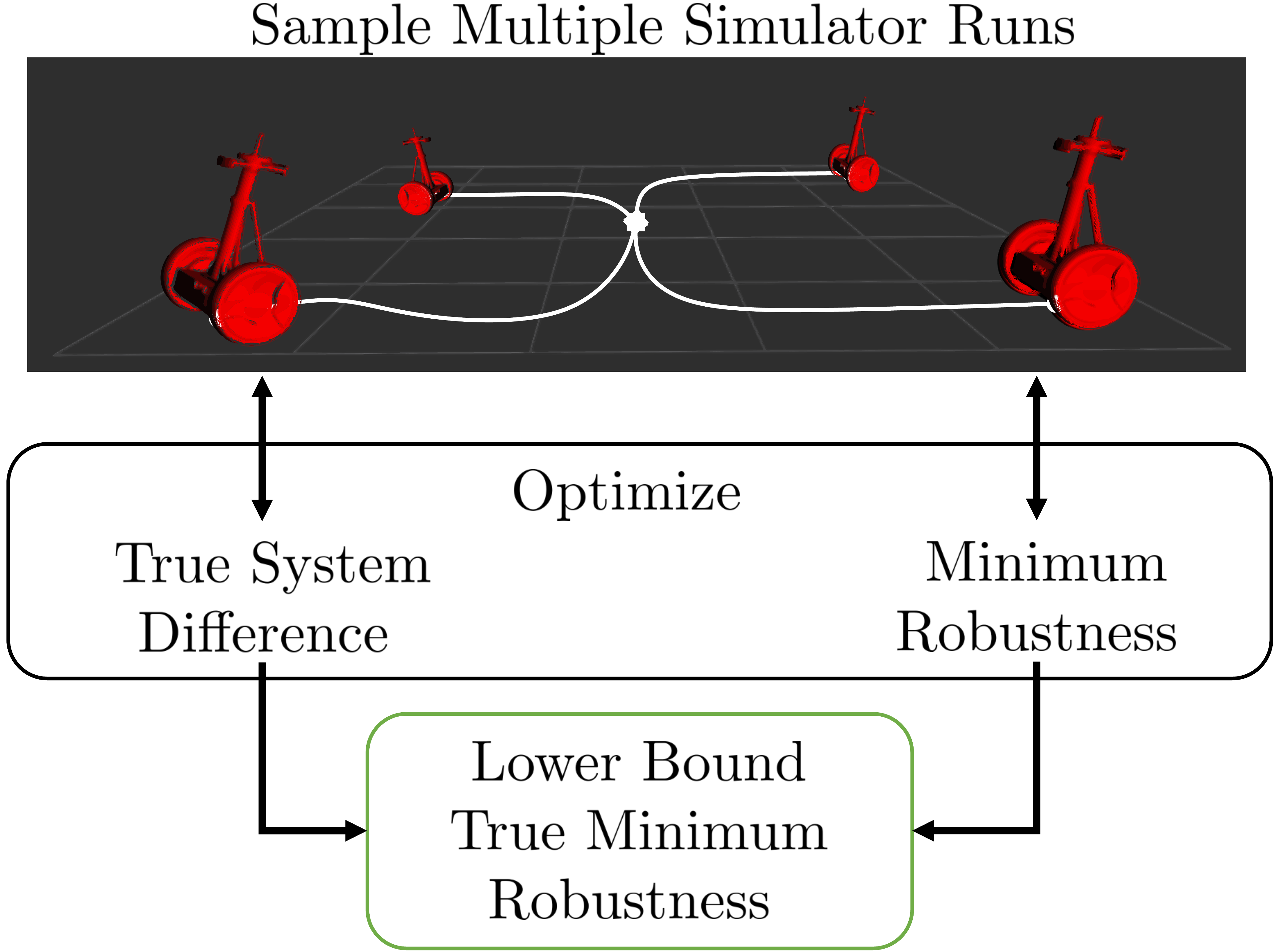}
    \gapclose
    \caption{Shown above is a flowchart for the procedure presented in the paper.  A Bayesian Optimization procedure queries a system simulator to lower bound the minimum simulator robustness and upper bound the maximum true system (semi)norm difference.  We leverage these bounds to lower bound a robustness risk measure for the true system, to some minimum probability.}
    \label{fig:title}
    \vspace{-0.25 in}
\end{figure}

However, existing work poses a few questions which we aim to address in our work.  Specifically, as posed in~\cite{corso2020survey}, attempting to directly apply the optimization based techniques prevalent in~\cite{ghosh2018verifying,gangopadhyay2019identification,deshmukh2017testing,berkenkamp2015safe,berkenkamp2017safe} to real-systems might require a prohibitively large number of true system samples to realize any counterexamples.  A Random Embeddings approach, as done in~\cite{deshmukh2017testing}, would facilitate scaling to higher dimensional optimization problems, but for any given dimension, can we minimize the number of true system evaluations required to verify or find counterexamples to true system behavior?  Additionally, can we exploit system simulators to offset the true-system evaluation cost?  Finally, can we build off existing Bayesian-based verification techniques and provide an algorithm that guarantees a prescribed tolerance to the optimal value, to further increase confidence in the verification statement or counterexample found?

\spacing
\newidea{Our Contribution:} Our contribution is twofold.  First, we detail a variant on existing, Bayesian Optimization Algorithms, that identifies minimal upper bounds to maximization problems to within a prescribed tolerance and minimum probability.  Secondly, we identify and prove that we can bound the solutions to two, simulator-based optimization problems via the prior optimization procedure.  Then, for a specific subclass of system specifications, we show these simulator-based bounds can be used to lower bound a robustness risk measure for the true system \textit{without having to directly test the true system and with fewer evaluations than direct, true-system testing}.  Finally, we demonstrate our results by lower bounding the risk measure of a highly noisy system simulator and make a comparison with respect to direct testing of the same simulator.

\spacing
\newidea{Organization:} In Section~\ref{sec:preliminaries} we outline some necessary background math. In Section~\ref{sec:statement} we formally state the problem under study.  In Section~\ref{sec:algorithm} we detail our proposed Bayesian Optimization Algorithm and prove a Theorem regarding its use.  In Section~\ref{sec:accuracy}, we employ this Algorithm to prove that we can lower bound a robustness risk measure for the true system.  Finally, Section~\ref{sec:bounding_test} shows an example use-case of Algorithm~\ref{alg:algorithm}, and Section~\ref{sec:bounding_robustness} shows an example wherein we lower bound this risk measure for a high-fidelity ROS simulator of a Segway, with its noisy counterpart.

\section{Problem Formulation}
In this section, we introduce some necessary background material.  Then, we formally state the problem under study.

\subsection{Mathematical Preliminaries}
\label{sec:preliminaries}
This section will detail some necessary background math.  We will start with some notation.

\spacing
\newidea{Notation:} $\mathbb{R}_+ = \{x \in \mathbb{R}~|~x \geq 0\}$, and $\mathbb{R}_{++} = \{x \in \mathbb{R}~|~x > 0\}$.  A multivariate function $f:\mathbb{X} \times \mathbb{Y} \to \mathbb{R}$ is partially-Lipschitz if there exists a (semi)norm $\|\cdot\|$ on $\mathbb{X}$ and a strictly positive constant $L \in \mathbb{R}_{++}$ such that $|f(x,y) - f(z,y)| \leq L \|x-z\|$.  This (semi)norm may depend on $y\in\mathbb{Y}$.  Finally, a signal $s: \mathbb{R}_+ \to \mathbb{R}^n$.  The set of all $n$-dimensional, real-valued signals $\signalspace = \{s: \mathbb{R}_+ \to \mathbb{R}^n~|~s(t) = x \in \mathbb{R}^n\}$.

\spacing
\newidea{Signal Temporal Logic:} Signal Temporal Logic (STL) is a language by which rich, time-varying system behavior can be succinctly expressed.  This language is based on atomic propositions $\phi \in \mathcal{A}$ which are boolean valued variables dependent on predicate functions $\mu: \mathbb{R}^n \to \mathbb{R}$:
\begin{equation}
    \phi(x) = \true \iff x \in \llbracket \phi \rrbracket = \{x \in \mathbb{R}^n~|~\mu(x) \sim b \}.
\end{equation}
Here, $\mathcal{A}$ is the set of all atomic propositions, $\phi(x)$ denotes the truth evaluation of the proposition $\phi$ at the state $x$, $b \in \mathbb{R}$, and $\sim~= \{\geq, \leq, <, >\}$ \cite{maler2004monitoring,baier2008principles}.  System specifications $\psi$ can be defined as follows, with $"|"$ demarcating definitions:
\begin{align}
    \label{eq:spec}
    \psi & \triangleq \phi | \neg \psi | \psi_1 \lor \psi_2 | \psi_1 \wedge \psi_2 | \psi_1 \until_{[a,b]} \psi_2,~ \psi \in \mathbb{S}.
\end{align}
Here, $\psi_1,\psi_2$ are specifications themselves, and $\psi_1 \until_{[a,b]} \psi_2$ reads as: $\psi_1$ should be true at time, $t = a$ and should continue to be true until $\psi_2$ is true, which should be true by some time, $t \leq b$ \cite{baier2008principles,maler2004monitoring}.
Finally, $\mathbb{S}$ is the set of all STL specifications.

We write $s(t) \models \psi$ when a signal $s$ satisfies a specification $\psi$ by time $t$.  Here, $\models$ is termed the satisfaction relation and is defined as follows:
\begin{align}
    & s(t) \models \phi \iff \phi(s(t)) = \true, \\
    & s(t) \models \neg \psi \iff s(t) \not \models \psi, \\
    & s(t) \models \psi_1 \lor \psi_2 \iff s(t) \models \psi_1 \lor s(t) \models \psi_2, \\
    & s(t) \models \psi_1 \wedge \psi_2 \iff s(t) \models \psi_1 \wedge s(t) \models \psi_2, \\
    & s(t) \models \psi_1 \until_{[a,b]} \psi_2 \iff \exists~t^*\leq \min\{b,t\}\suchthat \\
    & \quad \left(s(t') \models \psi_1~\forall~a \leq t' \leq t^*\right) \wedge \left(s(t^*) \models \psi_2 \right).
\end{align}

\spacing
\newidea{Gaussian Processes:}
The treatment of Gaussian Processes (GP) in this subsection stems primarily from~\cite{rasmussen2003gaussian} and will be specific to their applications in Machine Learning/Bayesian Optimization.  In this setting, GPs permit estimation and facilitate optimization of (perhaps) nonlinear, black-box functions $J: \mathbb{Z} \to \mathbb{Y}$.  Specifically, consider a set of points, $A_n = \{z_1, z_2, \dots, z_n\},~z_i \in \mathbb{Z}$ and a corresponding set of noisy evaluations of $J$ $\{y_1, y_2, \dots, y_n\}$.  As defined in~\cite{chowdhury2017kernelized}, we will assume the samples,
\begin{equation}
    \label{eq:sampling_criteria}
    y_i = J(z_i) + \xi_i,~\xi_i \sim \mathcal{N}(0,\lambda v^2),
\end{equation}
where $\lambda,v$ are GP regression parameters.  By treating the outputs $y_i$ as random variables, we can fit a GP $\pi$ to $J$ based on our choice of kernel function $k: \mathbb{Z} \times \mathbb{Z} \to \mathbb{R}_+$.
\begin{align}
    \label{eq:GP_mean}
    \mu_n(z) & = k_n(z)^T\left(K_n + \lambda I\right)^{-1}y_{1:n}, \\
    k_n(z,z') & = k(z,z') - k_n(z)^T\left(K_t + \lambda I\right)^{-1}k_n(z'), \\
    \label{eq:GP_sigma}
    \sigma_n(z) & = k_n(z,z).
\end{align}
Here, $k_n(z) = [k(z,z_1), \dots k(z,z_n)]^T$ is the covariance of $z$ with respect to the sampled data $z_i\in A_n$, $y_{1:n} = [y_1, y_2, \dots, y_n]^T$ are the noisy samples, and $(K_n)_{i,j} = k(z_i,z_j),~z_i,z_j\in A_n$ is the positive-definite Kernel Matrix. 

Finally, for any kernel $k$ there exists an associated space of functions estimate-able by said kernel, its Reproducing Kernel Hilbert Space (RKHS) $\mathcal{H}_k$.  Mathematically, for some set $Z$ and a Hilbert Space $\mathcal{H}_k$ of real-valued functions $J$ over $Z$, with a linear evaluation functional,
\begin{equation}
    L_z(J) = J(z),~\forall~J \in \mathcal{H}_k,~z \in Z,
\end{equation}
$\mathcal{H}_k$ is a RKHS if $\forall~z\in Z~\exists~M_z \geq 0$ such that, 
\begin{equation}
|L_z(J)| = |J(z)| \leq M_z \|J\|_{RKHS},~\forall~J\in\mathcal{H}_k.
\end{equation}
Here, $\|J\|_{RKHS}$ is the RKHS norm of the function $J \in \mathcal{H}_k$, defined with respect to the linear operator $L_x$.

\spacing
\newidea{Bayesian Optimization:} The brief description of Bayesian Optimization (BO) in this subsection stems primarily from~\cite{srinivas2009gaussian, chowdhury2017kernelized}.  Bayesian Optimization attempts to solve optimization problems of the following form:
\begin{equation}
    J^* = \max_{z \in \mathbb{Z}}~J(z),~\mathbb{Z}\subset \mathbb{R}^l,~l<\infty.
\end{equation}
Here, $\mathbb{Z}$ is typically a hyper-rectangle or some, compact set, for which membership is easily identifiable.  Additionally, the function $J$ is typically a non-convex, black-box function, for which gradients are not easily accessible.  The optimization procedure follows a series of steps.  First, either a Gaussian Process is provided or fit to an initial data-set $\mathbb{D}_0 = \{(z_i,y_i)\}_{i=1}^n$ with samples $y_i$ as in~\eqref{eq:sampling_criteria}.  Second, the next sample point $z_{i+1}$ is defined as the maximizer of an \textit{acquisition function} over the fitted Gaussian Process $\pi$ to the function $J$:
\begin{equation}
    \label{eq:UCB}
    z_{i+1} = \argmax_{z \in \mathbb{Z}}~\mu_i(z) + \beta_{i+1} \sigma_i(x).
\end{equation}
The Upper Confidence Bound (UCB) acquisition function is shown above, and is one, example acquisition function~\cite{srinivas2009gaussian,mockus1978application, kandasamy2018parallelised}.  Third, the procedure samples $z_{i+1}$, generates a new measurement $y_{i+1}$, adds it to the data-set, fits another Gaussian Process and repeats the procedure.

Additionally, BO procedures guarantee convergence by proving sub-linear growth in the sum-total regret $R_j = \sum_{i=1}^j r_i$ where $r_i = J^*-J(z_i)$. For certain kernels $k$ and choices of $\beta_{i+1}$, there are Bayesian Optimization procedures that guarantee sub-linear regret growth~\cite{chowdhury2017kernelized,srinivas2009gaussian}.  As we assume we have noisy samples $y_i$ of our function $J$, these regret growth bounds are written with respect to the \textit{maximum information gain} at iteration $i$, which, for our specific setting, is as follows:
\begin{gather}
    \label{eq:max_info_gain}
    \gamma_i = \max_{A \subset \mathbb{Z} \suchthat |A| = i}~I(y_A;J_A).
\end{gather}
Here, $I(y_A;J_A)$ is the \textit{mutual information gain} between $J_A = [J(z)]_{z\in A}$ and $y_A = J_A + \xi_A \sim \normal(0,\lambda v^2I)$, and quantifies the reduction in uncertainty about the objective $J$ after sampling points $z \in A$.

\subsection{Problem Statement}
\label{sec:statement}
To start, we consider our safety-critical system to be an uncertain, closed-loop control system:
\begin{equation}
    \label{eq:true}
    \dot x = f(x,u,d,w),~u(t) = U(x(t),d),
\end{equation}
where the system state $x \in\mathbb{R}^n$, the control input $u\in\mathcal{U} \subseteq \mathbb{R}^m$, the vector of known, variable phenomena $d \in \mathcal{D} \subseteq \mathbb{R}^p$, the unknown disturbance $w$ is defined via the unknown distribution $\pi_{true}$, and the controller $U$, is a feedback controller accounting for the variables $d$.  Examples of variable phenomena $d$ would be the initial condition of the system $x_0$, uncertain obstacle locations, \textit{etc}.  Additionally, we presume the controller for the true closed-loop system was built with respect to a nominal model and controller:
\begin{equation}
    \label{eq:sim}
    \dot{\hat x} = \hat f(\hat x, u, d, \hat w),~ \quad u(t) = \hat U(\hat x(t),d),
\end{equation}
with the same spaces as before for all appropriate variables, and with the addition that the simulator noise $\hat w$ is defined with respect to the unknown distribution $\pi_{nom}$.

Without loss of generality, we can assume that the vector of variable phenomena $d$ codifies the initial condition $x_0$ for either closed-loop system~\eqref{eq:true} or \eqref{eq:sim}.  As a result, the closed-loop nominal and true system trajectories, $\hat s, s \in \signalspace$, are uniquely determined by choice of variable phenomena $d$ and resulting noise sequences $W$ and $\hat W$.  More accurately,
\begin{align}
    s(t) = s_0 + \int_{l = 0}^t f\left(s(l),U\left(s(l),d\right),d,W(l)\right) dl, \\
    \hat s(t) = s_0 + \int_{l = 0}^t f\left(\hat s(l),\hat U\left(\hat s(l),d\right),d,\hat W(l)\right) dl,
\end{align}
where the vector of variable phenomena $d$ includes the initial condition $s_0$ among other phenomena $d'$, i.e., $d = [s_0, d']$.
We will represent these $d$-dependent closed-loop trajectories as
\begin{gather}
    \label{eq:signals}
    \hat \Sigma (d) = \hat s \sim \Pi_{nom}(d), ~\Sigma(d) = s \sim \Pi_{true}(d),
\end{gather}
for the unknown distributions $\Pi_{nom}(d)$, $\Pi_{true}(d)$, over $\signalspace$.  Finally, we assume the true system is to satisfy an operational, STL specification $\psi$~\eqref{eq:spec} equipped with a robustness measure $\rho$, defined as follows (inspired by~\cite{Madsen2018}):

\begin{definition}
For any signal temporal logic specification $\psi$ there exists a robustness measure $\rho$ which quantifies how robustly a given signal satisfies/does not satisfy $\psi$:
\begin{gather}
    \label{eq:robMeasure}
    \begin{split}
        \hspace{-0.1 in}\rho: \signalspace \times~\mathbb{R}_+ \to \mathbb{R} \suchthat      \rho(s,t) \geq 0 \iff s(t) \models \psi.
    \end{split}
\end{gather}
\end{definition}
As the true system signal $\Sigma(d)$ as per equation~\eqref{eq:signals} is a random variable, we assume the verification goal is to minimize a risk measure for the system robustness at some evaluation time $T$:
\begin{equation}
    \label{eq:robust_real}
     \rho^* = \min_{d \in \mathcal{D}}~\expect\left[\rho\left(\Sigma(d),T\right)\right] - r \sqrt{\var\left(\rho\left(\Sigma(d),T\right)\right)}.
\end{equation}
Here $r>0$, and the moments are calculated with respect to the distribution $\Pi_{true}(d)$.

\begin{remark}
The reason we deal with a risk measure in equation~\eqref{eq:robust_real} as opposed to optimizing for the minimum expected robustness, is that a system may realize a positive minimum expected robustness yet still almost always fail to satisfy a specification~\cite{artzner1999coherent,ahmadi2020constrained,safaoui2020control}.  For example, consider a system where $\rho(\Sigma(d),T)<0$ for the vast majority of runs, but for a few rare cases, $\rho(\Sigma(d),T) >>0$, such that the expected robustness is positive, but the system almost always fails to satisfy its specification.  Optimizing for a risk measure accounts for this case, by discounting the objective with the variance of the robustness measure at any $d$.
\end{remark}

\spacing
\newidea{Our Approach:} While we could directly attempt to solve for $\rho^*$ from~\eqref{eq:robust_real} via Bayesian Optimization, such a procedure would require multiple runs of the true system, and \textit{ideally we would like to find a large lower bound to $\rho^*$ while minimizing the number of true system evaluations.  Hence, we will instead solve optimization problems over functions of the nominal and true systems, $\hat \Sigma(d)$ and $\Sigma(d)$, variable phenomena $d$ and evaluation time $T$:}
\begin{gather}
    \hat \rho(d) = \expect_{\Pi_{nom}(d)}\left[\rho\left(\hat \Sigma(d),T\right)\right], \label{eq:nom_expect_rob}\\
    e(d) = \expect_{\Pi_{nom}(d),\Pi_{true}(d)}\left[\left\|\Sigma(d) - \hat \Sigma(d) \right\|\right], \label{eq:sim_gap} \\
    \label{eq:sim_opt}
    \hat \rho^* = \min_{d \in \mathcal{D}}~\hat \rho(d),~e^* = \max_{d \in \mathcal{D}}~e(d).
\end{gather}
Here, $\|\cdot\|$ is a (semi)norm over $\signalspace$.  Then, our procedure $i)$ bounds the solutions to optimization problems~\eqref{eq:sim_opt} via a Bayesian Optimization procedure we develop, and $ii)$ leverages these bounds to construct a lower bound for $\rho^*$ to minimum probability $1-\delta$, for some $\delta \in (0,1)$.  To note, $\hat \rho^*$ only defines the minimum expected nominal robustness and is not a risk measure like $\rho^*$, as it does not account for the variance of the robustness measure at any given $d$.  However, Popoviciu's inequality~\cite{popoviciu1935equations} permits bounding of the variance term in equation~\eqref{eq:robust_real}, thereby allowing us to solve a simpler, expected value problem for $\hat \rho^*$ to lower bound $\rho^*$.  This leads to our problem statement.

% As we intend to solve both optimization problems~\eqref{eq:sim_opt} via Bayesian Optimization, we must fit Gaussian Processes, $\pi_1$ and $\pi_2$, to the nominal, expected robustness function $\hat \rho$ and expected (semi)norm signal difference function $f$, respectively. This leads to our problem statement.
\begin{problem}
Let $\pi_1$ and $\pi_2$ be two Gaussian Processes fit to datasets of sample-measurement pairs of the functions $\hat \rho$ in equation~\eqref{eq:nom_expect_rob} and $e$ in equation~\eqref{eq:sim_gap}, respectively.  For some $\delta \in (0,1)$, determine a sufficiently large lower bound $p$ such that $\prob_{\pi_1,\pi_2}[\rho^* \geq p] \geq 1-\delta$, with $\rho^*$ as in equation~\eqref{eq:robust_real}.
\end{problem}

\section{A Bayesian-based Bound Finder}
\label{sec:algorithm}
In this section, we will detail our variant on existing GP-UCB Bayesian Optimization Procedures, that identifies minimal upper bounds $\epsilon$ to solutions to optimization problems of the following form that meet an Assumption to-be-written:
\begin{equation}
    \label{eq:base_bayesian}
    J^* = \max_{z \in \mathbb{Z}}~J(z),~\mathbb{Z} \subset \mathbb{R}^l,~l<\infty.
\end{equation}
We construct such an algorithm, as we will require accurate estimates of the minimum simulator robustness $\hat \rho^*$ and true system (semi)norm difference $e^*$ in order to lower bound the true system minimum robustness $\rho^*$.  To develop our algorithm, we use the IGP-UCB Algorithm in~\cite{chowdhury2017kernelized} referenced in prior works in the controls literature~\cite{ghosh2018verifying,gangopadhyay2019identification,berkenkamp2016bayesian,berkenkamp2017safe}.

Before stating our main result for this section, we will briefly describe our Algorithm~\ref{alg:algorithm}.  To start, we require positive constants $\delta \in (0,1)$, $B,R,\alpha,c \in \mathbb{R}_{++}$, and an initial data-set $\mathbb{D}_0 = \{(z,y)\}$, where the sample $y$ satisfies equation~\eqref{eq:sampling_criteria}. Then, Algorithm~\ref{alg:algorithm} first defines in Line 2, a scale factor
\begin{equation}
    \beta_i = B + R \sqrt{2 \ln{\frac{\sqrt{\det\left((1+\frac{2}{i})I + K_i\right)}}{\delta}}}, \label{eq:beta}
\end{equation}
and, in Line 3, identifies the maximizer of the UCB acquisition function $z_i$ (as per equation~\eqref{eq:UCB}) with respect to this $\beta_i$ and the fitted Gaussian Process $\pi$ to $J$ at iteration $i$.  In Line 4, the algorithm collects a noisy measurement $y_i$ of $J(z_i)$, and the sample pair $(z_i,y_i)$ is added to the data-set generating $\mathbb{D}_i$.  Line 5 defines the simple regret bound
\begin{equation}
    \label{eq:F}
    F_i = 2 \beta_i \sigma_{i-1}(z_i).
\end{equation}
Here, $\sigma_{i-1}$ is the variance of the fitted Gaussian Process to the data-set $\mathbb{D}_{i-1}$. Lines 6-9 check whether $F_i \leq \alpha$, the desired tolerance, and if so, the algorithm outputs $\epsilon = y_i + \alpha + c$ and terminates.  Otherwise, in Line 10, the algorithm updates the fitted Gaussian Process $\pi$ with respect to $\mathbb{D}_i$. 

We will now state an assumption used throughout this section and in other works utilizing Bayesian Optimization~\cite{ghosh2018verifying,berkenkamp2016bayesian}.  Then, we will move to the first main result of our paper.
\begin{assumption}
\label{assump:bayesian}
For the optimization problem~\eqref{eq:base_bayesian}, the feasible region $\mathbb{Z}$ is compact and convex.  Additionally, for some $B,R\in\mathbb{R}_{++}$, and kernel $k$, the objective function $J$ has $\|J\|_{RKHS} \leq B$, and the samples $y_i$ of $J(z_i)$, as per equation~\eqref{eq:sampling_criteria}, are corrupted by $R$-sub Gaussian Noise $\forall~i$.
\end{assumption}

\noindent This leads to the first key result of the paper wherein we show that the proposed algorithm is guaranteed to terminate and, at termination, identify a minimal upper bound $\epsilon$ to the function maximizer $J^*$ with minimum probability defined by the following function:
\begin{equation}
    \label{eq:probfac}
    \Delta(c,\delta,R) = \left(1-\frac{R}{c\sqrt{2\pi}}\exp\left(-\frac{c^2}{2R^2}\right)\right)(1-\delta).
\end{equation}

\begin{theorem}
\label{thm:algorithm}
Let Assumption~\ref{assump:bayesian} hold, let $\delta \in (0,1]$, and let $\alpha,c \in \mathbb{R}_{++}$.  At termination $i^*$, Algorithm~\ref{alg:algorithm} outputs $\epsilon = y_{i^*} + c + \alpha$ such that $\prob_{\pi}[J^* \leq \epsilon] \geq \Delta(c,\delta,R)$ with $J^*$ as in~\eqref{eq:base_bayesian}, $\Delta$ as in~\eqref{eq:probfac}, $\pi$ as in Line 10, and $y_{i^*}$ as in Line 4.
\end{theorem}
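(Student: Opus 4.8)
The plan is to combine two ingredients: the IGP-UCB high-probability confidence envelope of~\cite{chowdhury2017kernelized}, which yields a \emph{deterministic} sandwich of $J^*$ at the stopping iteration, and a one-sided Gaussian tail bound controlling the single noise realization $\xi_{i^*}$ corrupting the reported sample $y_{i^*}$. I would write $z^* \in \argmax_{z\in\mathbb{Z}} J(z)$ and let $B_j$ denote the event that $|J(z)-\mu_{j-1}(z)| \leq \beta_j \sigma_{j-1}(z)$ holds for every $z \in \mathbb{Z}$ at iteration $j$. Under Assumption~\ref{assump:bayesian}, the scale factor $\beta_j$ in~\eqref{eq:beta} is exactly the one for which the self-normalized martingale bound of~\cite{chowdhury2017kernelized} certifies $\prob_\pi[\cap_j B_j]\geq 1-\delta$; the determinant term there is the standard information-gain normalization.

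First I would extract the deterministic bound valid on $B_{i^*}$. Since $z_{i^*}$ maximizes the UCB acquisition~\eqref{eq:UCB} at iteration $i^*$, on $B_{i^*}$ we have $J^* = J(z^*) \leq \mu_{i^*-1}(z^*) + \beta_{i^*}\sigma_{i^*-1}(z^*) \leq \mu_{i^*-1}(z_{i^*}) + \beta_{i^*}\sigma_{i^*-1}(z_{i^*}) \leq J(z_{i^*}) + 2\beta_{i^*}\sigma_{i^*-1}(z_{i^*}) = J(z_{i^*}) + F_{i^*}$, where the outer inequalities are the two sides of the confidence envelope and the middle one is optimality of $z_{i^*}$. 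The termination test (Lines 6--9) guarantees $F_{i^*}\leq\alpha$, so $J^* \leq J(z_{i^*}) + \alpha$. Substituting $J(z_{i^*}) = y_{i^*}-\xi_{i^*}$ from~\eqref{eq:sampling_criteria} gives $J^* \leq y_{i^*} + \alpha - \xi_{i^*}$, which is at most $\epsilon = y_{i^*}+c+\alpha$ precisely when $\xi_{i^*}\geq -c$. Modeling $\xi_{i^*}$ as $\mathcal{N}(0,R^2)$ and invoking the Mills-ratio bound $\prob[\mathcal{N}(0,1)>t]\leq (t\sqrt{2\pi})^{-1}e^{-t^2/2}$ yields $\prob[\xi_{i^*}\geq -c] \geq 1 - \tfrac{R}{c\sqrt{2\pi}}e^{-c^2/2R^2}$, which is the first factor of $\Delta$.

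The crux---and the step I expect to be the main obstacle---is combining these into the \emph{product} $\Delta(c,\delta,R)$ rather than a weaker union bound. The key observation is that the stopping rule is \emph{predictable}: the posterior standard deviations $\sigma_{j-1}(\cdot)$, and hence $\beta_j$ and $F_j = 2\beta_j\sigma_{j-1}(z_j)$, depend on the query locations and kernel alone and not on the observed values, so both $\{i^*=j\}$ and $B_j$ are measurable with respect to $\mathcal{F}_{j-1}=\sigma(\xi_1,\dots,\xi_{j-1})$, while the fresh sample $\xi_j$ is independent of $\mathcal{F}_{j-1}$. Decomposing over the value of $i^*$, $\prob[B_{i^*}\cap\{\xi_{i^*}\geq -c\}] = \sum_j \prob[\{i^*=j\}\cap B_j]\,\prob[\xi_j\geq -c] \geq \big(1-\tfrac{R}{c\sqrt{2\pi}}e^{-c^2/2R^2}\big)\sum_j\prob[\{i^*=j\}\cap B_j]$, and since $\cap_j B_j \subseteq B_{i^*}$ the trailing sum equals $\prob[B_{i^*}]\geq 1-\delta$. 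On this intersection event the deterministic bound gives $J^*\leq\epsilon$, so $\prob_\pi[J^*\leq\epsilon]\geq(1-\delta)(1-\tfrac{R}{c\sqrt{2\pi}}e^{-c^2/2R^2}) = \Delta(c,\delta,R)$.

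Finally I would close the termination gap that makes the decomposition over $j$ exhaustive: the standard IGP-UCB analysis bounds $\sum_{i\leq n}\sigma_{i-1}^2(z_i)$ by a constant multiple of the maximum information gain $\gamma_n$ in~\eqref{eq:max_info_gain}, which grows sublinearly for the kernels under consideration, forcing $\min_{i\leq n} F_i\to 0$ and hence $F_i\leq\alpha$ in finitely many steps, so that $\prob[i^*<\infty]=1$ and no probability mass escapes the sum. The only delicate points are verifying the predictability of $F_j$ (so the stopping time leaks no information about $\xi_{i^*}$) and confirming that the determinant normalization in~\eqref{eq:beta} matches the confidence level required by~\cite{chowdhury2017kernelized}.
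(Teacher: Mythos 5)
Your proof is correct and follows the same skeleton as the paper's: termination forced by sublinear growth of the maximum information gain (the paper argues by contradiction on the running average of $F_i$, via its Lemma~\ref{lem:Fbound} and Proposition~\ref{prop:bound_information}); then, at the stopping index, the confidence-envelope/simple-regret chain $J^* \leq J(z_{i^*}) + F_{i^*} \leq J(z_{i^*}) + \alpha$ (the paper's Lemma~\ref{lem:bound_simple_regret}); and finally a Gaussian tail bound to absorb the noise $\xi_{i^*}$ hidden in $y_{i^*}$ at the cost of $+c$. Where you genuinely go beyond the paper is the step you correctly flag as the crux: the paper combines the two probabilities (the $1-\delta$ from the envelope and the Mill's-inequality tail bound) by simply multiplying them, asserting ``as a result'' that $\prob_{\pi}[J^* \leq \epsilon] \geq \Delta(c,\delta,R)$, with no justification that the two events are independent --- a naive union bound would only yield $1-\delta$ minus the tail probability, which is strictly weaker than the product. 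Your stopping-time decomposition, using that $\{i^*=j\}$ and the envelope event $B_j$ are $\mathcal{F}_{j-1}$-measurable while $\xi_j$ is independent of $\mathcal{F}_{j-1}$, supplies exactly the missing argument and is what makes the product form legitimate. Two small caveats in your writeup: first, the posterior variances $\sigma_{j-1}(\cdot)$ as functions depend only on query locations, but the locations $z_j$ are chosen via $\mu_{j-1}$ and therefore do depend on past observations; this does not hurt you, since $F_j$ remains $\mathcal{F}_{j-1}$-measurable, which is all the decomposition needs, but your stated justification (``not on the observed values'') is imprecise. Second, you share the paper's minor abuse of applying the Gaussian Mills ratio to noise that is only assumed $R$-sub-Gaussian rather than exactly Gaussian.
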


\begin{algorithm}[t]
\caption{Minimal Upper Bound Determination}\label{alg:algorithm}
\begin{algorithmic}[1]
\Require $\delta \in (0,1]$, $B,R \in \mathbb{R}_{++}$, an initial data-set $\mathbb{D}_0 = \{(z,y)~|~z \in \mathbb{Z},~y$ as per equation~\eqref{eq:sampling_criteria}$\}$, tolerance $\alpha \in \mathbb{R}_{++}$ and noise bound $c \in \mathbb{R}_{++}$

\hspace{-0.56 in} \noindent \textbf{Returns:} A Gaussian Process $\pi$, and an upper bound $\epsilon$ such that $\prob_{\pi}[J^* \leq \epsilon] \geq \Delta(c,\delta,R)$, with $\Delta$ as in~\eqref{eq:probfac}.
\Initialize $i=1$, $\eta_i = \frac{2}{i}$, Gaussian Process with mean $\mu_0$ and covariance $\sigma_0$ from the data-set, $\mathbb{D}_0$ as per equations~\eqref{eq:GP_mean} and \eqref{eq:GP_sigma}.
\While{True}
\State $\beta_i \gets B + R \sqrt{2 \ln{\frac{\sqrt{\det\left((1+\eta_i)I + K_i\right)}}{\delta}}}$
\State $z_i \gets \argmax_{z \in \mathbb{Z}}~\mu_{i-1}(z) + \beta_i\sigma_{i-1}(z)$
\State $\mathbb{D}_{i} \gets \mathbb{D}_{i-1} \cup (z_i, y_i$ as per equation~\eqref{eq:sampling_criteria})
\State $F_i \gets 2 \beta_i \sigma_{i-1}(z_i)$
\If{$F_i \leq \alpha$}
    \State $\epsilon = y_i + \alpha + c$
    \State \textbf{return} $\epsilon$ 
\EndIf
\State Update the Gaussian Process $\pi$ with mean $\mu_i$ and variance $\sigma_i$ as per~\eqref{eq:GP_mean} and \eqref{eq:GP_sigma} with respect to $\mathbb{D}_i$
\State $i \gets i+1$
\EndWhile
\end{algorithmic}
\end{algorithm}

Proving Theorem~\ref{thm:algorithm} requires two propositions and two lemmas.  The first proposition bounds both the variance of the objective $J$ with respect to the fitted Gaussian Process $\pi$ and the growth of the scale factor $\beta_i$.

\begin{proposition}[Theorem~2 in~\cite{chowdhury2017kernelized}]
\label{prop:gpr_bound}
Let $\beta_i$ be as in~\eqref{eq:beta}, $\gamma_j$ as in~\eqref{eq:max_info_gain}, $\delta \in (0,1]$, and let Assumption~\ref{assump:bayesian} hold.  With probability $\geq 1-\delta$, $|\mu_{i-1}(z) - J(z)| \leq \beta_i\sigma_{i-1}(z)~\forall~i$, and
\begin{gather}
    \beta_i \leq B + R \sqrt{2 \left(\gamma_j + 1 + \ln{\frac{1}{\delta} }\right)},~\forall~ i = 1,2,\dots,j.
\end{gather}
\end{proposition}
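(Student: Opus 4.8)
The statement is a transcription of Theorem~2 of~\cite{chowdhury2017kernelized}, so my plan is to confirm that our hypotheses match theirs and then reconstruct the two ingredients the theorem bundles together: an \emph{anytime} confidence band of width $\beta_i\sigma_{i-1}(z)$, and a \emph{deterministic} upper bound on the scale factor $\beta_i$ itself. Throughout I identify the regularizer with $\lambda = 1 + \eta_i = 1 + 2/i$, which is precisely the choice that makes $K_i + \lambda I = (1+\eta_i)I + K_i$ and hence produces the determinant appearing in~\eqref{eq:beta}. For the confidence band I would first write the predictive mean~\eqref{eq:GP_mean} as a kernel-ridge estimator and, using $y_{1:i-1} = J_{1:i-1} + \xi_{1:i-1}$ with $J_{1:i-1} = [J(z_1),\dots,J(z_{i-1})]^T$ and the sub-Gaussian noise $\xi_{1:i-1}$ from~\eqref{eq:sampling_criteria}, decompose the pointwise error at a fixed $z$ as
\begin{equation*}
\mu_{i-1}(z) - J(z) = \underbrace{\left(k_{i-1}(z)^T(K_{i-1}+\lambda I)^{-1}J_{1:i-1} - J(z)\right)}_{\text{bias}} + \underbrace{k_{i-1}(z)^T(K_{i-1}+\lambda I)^{-1}\xi_{1:i-1}}_{\text{noise}}.
\end{equation*}

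The \textbf{bias} term is the deterministic regularization error of kernel ridge regression. Using Assumption~\ref{assump:bayesian}, namely $\|J\|_{RKHS}\leq B$, together with the reproducing property, I would bound it by a Cauchy--Schwarz argument in $\mathcal{H}_k$: the operator $k_{i-1}(z)^T(K_{i-1}+\lambda I)^{-1}$ reproduces $J(z)$ up to an error of magnitude at most $\|J\|_{RKHS}$ times the posterior term $\sigma_{i-1}(z)$ of~\eqref{eq:GP_sigma}, i.e. at most $B\,\sigma_{i-1}(z)$.

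The \textbf{noise} term is the crux and the main obstacle. Writing it as an inner product and applying Cauchy--Schwarz in the metric induced by $(K_{i-1}+\lambda I)$ gives $|\text{noise}| \leq \sigma_{i-1}(z)\,\|\xi_{1:i-1}\|_{W}$ for the appropriate weighting $W$; the difficulty is controlling the self-normalized quantity $\|\xi_{1:i-1}\|_{W}$ \emph{uniformly over all $i$}. Here I would invoke the self-normalized martingale concentration inequality (Theorem~1 of~\cite{chowdhury2017kernelized}, the RKHS analogue of the Abbasi-Yadkori-type bound): since $\{\xi_i\}$ is an $R$-sub-Gaussian martingale difference sequence adapted to the filtration generated by the samples, a method-of-mixtures supermartingale argument yields, with probability $\geq 1-\delta$ simultaneously for all $i$,
\begin{equation*}
\|\xi_{1:i-1}\|_{W}^2 \leq 2R^2 \ln\frac{\sqrt{\det\!\left((1+\eta_i)I + K_i\right)}}{\delta}.
\end{equation*}
This is the technically demanding step, because it is the anytime (uniform-in-$i$) version that licenses the ``$\forall~i$'' in the statement. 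Adding the bias and noise bounds on this single probability-$(1-\delta)$ event gives $|\mu_{i-1}(z)-J(z)| \leq \beta_i\sigma_{i-1}(z)$ with $\beta_i$ exactly as in~\eqref{eq:beta}, for all $z$ and all $i$.

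For the deterministic bound on $\beta_i$, I would estimate the radicand by factoring the $i\times i$ determinant as $\det((1+\eta_i)I+K_i) = (1+\eta_i)^i\,\det\!\left(I + (1+\eta_i)^{-1}K_i\right)$. The first factor contributes $\tfrac{i}{2}\ln(1+2/i)\leq 1$ by $\ln(1+x)\leq x$; since $(1+\eta_i)^{-1}\leq 1$ we have $\det(I+(1+\eta_i)^{-1}K_i)\leq\det(I+K_i)$, so $\tfrac12\ln\det(I+(1+\eta_i)^{-1}K_i)\leq\gamma_i$ via the Gaussian closed form $\gamma_i = \tfrac12\ln\det(I+K_i)$ of the mutual information~\eqref{eq:max_info_gain}. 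Monotonicity of the maximum information gain, $\gamma_i\leq\gamma_j$ for $i\leq j$ (enlarging the sampled set cannot decrease the information gain), then gives $\tfrac12\ln\det((1+\eta_i)I+K_i)\leq \gamma_j + 1$. Substituting this into~\eqref{eq:beta} and absorbing the $\ln(1/\delta)$ term yields $\beta_i \leq B + R\sqrt{2(\gamma_j + 1 + \ln(1/\delta))}$ for all $i = 1,\dots,j$, completing the proof.
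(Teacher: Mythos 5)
The paper gives no proof of this proposition, importing both inequalities directly from the proof of Theorem~2 in~\cite{chowdhury2017kernelized}, and your reconstruction faithfully mirrors that source: the bias/noise decomposition of the kernel-ridge error with the self-normalized martingale concentration (their Theorem~1) supplying the anytime confidence width, and the determinant factorization with $i\ln(1+2/i)\leq 2$ plus monotonicity of $\gamma_i$ supplying the bound on $\beta_i$. Your proposal is correct and takes essentially the same approach as the proof the paper relies on.
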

% \begin{proof}
In Proposition~\ref{prop:gpr_bound}, $\mu_{i-1}$ and $\sigma_{i-1}$ are the mean/variance of the fitted Gaussian Process $\pi$ to the objective $J$ based on the data-set $\mathbb{D}_{i-1}$, and $\gamma_j$ is as in equation~\eqref{eq:max_info_gain}.  For context, both inequalities in Proposition~\ref{prop:gpr_bound} were derived in the proof of Theorem~2 in~\cite{chowdhury2017kernelized}. The second proposition bounds the growth rate of $\gamma_j$, defined in equation~\eqref{eq:max_info_gain}

\spacing
\begin{proposition}[Theorem~5 in~\cite{srinivas2009gaussian}]
\label{prop:bound_information}
Let Assumption~\ref{assump:bayesian} hold. There exists a kernel $k$ such that the growth in the maximum information gain $\gamma_j$ satisfies the following inequality:
\begin{equation}
    \gamma_j \leq O(j^p \log(j)),~p < 0.5.
\end{equation}
\end{proposition}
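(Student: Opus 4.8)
The plan is to reduce the combinatorial quantity $\gamma_j$ to a purely spectral one and then bound that spectral quantity using the eigendecay of a conveniently chosen kernel. For a Gaussian process with regularization $\lambda$, the mutual information appearing in~\eqref{eq:max_info_gain} admits the closed form $I(y_A;J_A) = \tfrac{1}{2}\log\det\!\left(I + \lambda^{-1}K_A\right)$, where $K_A = [k(z,z')]_{z,z'\in A}$ is the kernel matrix on the set $A$. Hence
\begin{equation}
  \gamma_j = \tfrac{1}{2}\max_{A\subset\mathbb{Z}:\,|A|=j}\,\log\det\!\left(I + \lambda^{-1}K_A\right),
\end{equation}
and diagonalizing $K_A$ gives $\log\det(I+\lambda^{-1}K_A) = \sum_{t=1}^{j}\log(1+\lambda^{-1}\hat\kappa_t)$ in terms of the nonnegative empirical eigenvalues $\hat\kappa_1\ge\dots\ge\hat\kappa_j$ of $K_A$. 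The task is therefore to bound this sum of logarithms uniformly over every size-$j$ subset $A$.

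The key step is a head--tail decomposition of the spectrum. Under Assumption~\ref{assump:bayesian} we may take $k(z,z)\le 1$ (rescaling the kernel only rescales $B$), so $\sum_{t}\hat\kappa_t = \mathrm{tr}(K_A)\le j$. Fixing a cutoff $j^{*}$, I would bound the contribution of the top $j^{*}$ empirical eigenvalues by $j^{*}\log(1+\lambda^{-1}j/j^{*})$ using the trace constraint together with concavity of $\log$ (Jensen), and bound the remaining eigenvalues linearly via $\sum_{t>j^{*}}\log(1+\lambda^{-1}\hat\kappa_t)\le \lambda^{-1}\sum_{t>j^{*}}\hat\kappa_t$. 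The remaining ingredient is Mercer's theorem: writing $k(z,z')=\sum_{s\ge1}\kappa_s\,\phi_s(z)\phi_s(z')$ for the integral-operator eigenvalues $\kappa_1\ge\kappa_2\ge\cdots$ with respect to a base measure on the compact set $\mathbb{Z}$ (which exists by Assumption~\ref{assump:bayesian}), the tail of the empirical spectrum is controlled by the operator tail $B_k(j^{*})=\sum_{s>j^{*}}\kappa_s$, uniformly over $A$. Combining these yields a bound of the form
\begin{equation}
  \gamma_j \le O\!\left(j^{*}\log j + \lambda^{-1} j\, B_k(j^{*})\right),
\end{equation}
a trade-off between a head term growing with the cutoff and a tail term shrinking with it.

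Finally, I would optimize the cutoff $j^{*}$ and instantiate a concrete kernel. For a kernel with polynomial eigendecay $\kappa_s=O(s^{-b})$, $b>1$ (e.g.\ a Mat\'ern kernel of smoothness $\nu$ on a domain in $\mathbb{R}^d$, for which $b=(2\nu+d)/d$), one has $B_k(j^{*})=O((j^{*})^{1-b})$; balancing the two terms at $j^{*}\sim(j/\log j)^{1/b}$ gives $\gamma_j=O(j^{1/b}\log j)$, which is of the claimed form $O(j^{p}\log j)$ with $p=1/b<1/2$ whenever $b>2$, i.e.\ whenever $2\nu>d$. Cleaner still, since the statement only asserts the \emph{existence} of such a kernel, I would simply take $k$ to be the squared-exponential kernel, whose operator eigenvalues decay faster than any polynomial; the same optimization then yields $\gamma_j=O((\log j)^{d+1})$, which is dominated by $O(j^{p}\log j)$ for every $p>0$ and in particular for some $p<0.5$. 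This exhibits an explicit kernel meeting the claim.

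The main obstacle is the uniform spectral comparison: the cutoff bound must hold for the worst-case adversarial set $A$ of size $j$, not merely for a random or greedily chosen one, so relating the empirical eigenvalues $\hat\kappa_t$ of $K_A$ to the fixed operator eigenvalues $\kappa_s$ requires care. I would handle this exactly as in the argument underlying Theorem~5 of~\cite{srinivas2009gaussian}, whose operator-theoretic estimate controls $\sum_{t>j^{*}}\hat\kappa_t$ by $j\,B_k(j^{*})$ uniformly in $A$ through an averaging argument over the base measure; the monotonicity and submodularity of the map $A\mapsto I(y_A;J_A)$ then guarantee that reducing to a greedy selection and back to the maximum loses only the constant factor $1/(1-e^{-1})$. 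Everything else---the Jensen step for the head, the linearization of the tail, and the scalar optimization of $j^{*}$---is routine once that comparison is in hand.
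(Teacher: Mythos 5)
The paper offers no proof of this proposition: it is stated and used purely as a citation, with the surrounding text saying only that the bound ``stems directly from Theorem 5 in~\cite{srinivas2009gaussian}.'' Your sketch instead reconstructs the proof of the cited theorem itself, and its architecture is faithful to that source: the closed form $I(y_A;J_A)=\tfrac{1}{2}\log\det\left(I+\lambda^{-1}K_A\right)$ for Gaussian observations, the reduction from the adversarial maximizing set to a greedy selection at the cost of the constant $1/(1-e^{-1})$ via monotone submodularity, and the head--tail split of the empirical spectrum against the Mercer eigenvalues with the cutoff $j^{*}$ optimized per eigendecay class are exactly the ingredients of Theorem~5 (proved via Theorem~8) in~\cite{srinivas2009gaussian}. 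Since the proposition asserts only the \emph{existence} of a suitable kernel, your fallback to the squared-exponential kernel, for which the cited analysis yields $\gamma_j = O((\log j)^{d+1})$, which is indeed $O(j^{p}\log j)$ for every $p>0$, settles the claim as stated.

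One caveat worth flagging: your Mat\'ern computation overclaims what the invoked argument actually delivers. The uniform control of the empirical tail $\sum_{t>j^{*}}\hat\kappa_t$ by $j\,B_k(j^{*})$ is precisely where the original proof pays a price: the averaging argument over the base measure controls sets drawn from that measure, and Srinivas et al.\ pass to adversarial sets by a discretization step that inflates the Mat\'ern exponent to $d(d+1)/(2\nu+d(d+1))$ rather than your $1/b = d/(2\nu+d)$. The sharper exponent you state is true but is a later refinement obtained by a different, projection-based argument, so it does not follow ``exactly as in'' Theorem~5. This does not endanger the proposition: the weaker Mat\'ern exponent is still below $0.5$ once $\nu > d(d+1)/2$ (consistent with the paper's experiments, which use a Mat\'ern kernel with large $\nu$), and your squared-exponential instantiation is untouched by the issue.
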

%\begin{proof}
As before, the growth bound in Proposition~\ref{prop:bound_information} stems directly from Theorem 5 in~\cite{srinivas2009gaussian}, which provides the growth bound for the information gain $\gamma_j$ for common kernels.  Then, our first lemma proves that the simple regret $r_i$ is upper bounded by the simple regret bound $F_i$ with probability $\geq 1-\delta$.  This probability is over the Gaussian Process $\pi$ defined in Line 10 of Algorithm~\ref{alg:algorithm}.
\spacing
\begin{lemma}
\label{lem:bound_simple_regret}
Let Assumption~\ref{assump:bayesian} hold, and let $F_i$ be as in~\eqref{eq:F}. The simple regret $r_i$ satisfies the following inequality with respect to the Gaussian Process $\pi$ (Line 10):
\begin{equation}
    \prob_{\pi}[r_i \leq F_i] \geq 1-\delta.
\end{equation}
\end{lemma}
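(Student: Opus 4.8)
The plan is to reduce the lemma to the high-probability confidence bound already supplied by Proposition~\ref{prop:gpr_bound} together with the defining optimality property of the UCB selection rule in Line~4. First I would fix the (good) event on which $|\mu_{i-1}(z) - J(z)| \leq \beta_i \sigma_{i-1}(z)$ holds simultaneously for all $i$ and all $z \in \mathbb{Z}$; by Proposition~\ref{prop:gpr_bound} this event has probability at least $1-\delta$ with respect to the Gaussian Process $\pi$ and the sampling noise. Everything that follows is then a deterministic chain of inequalities that holds pointwise on this event, so the final probabilistic statement is inherited directly from it.

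On the good event, let $z^* \in \argmax_{z \in \mathbb{Z}} J(z)$, whose existence follows from compactness of $\mathbb{Z}$ (Assumption~\ref{assump:bayesian}) and continuity of $J \in \mathcal{H}_k$. The key chain I would establish is
\begin{align}
    J^* = J(z^*) &\leq \mu_{i-1}(z^*) + \beta_i \sigma_{i-1}(z^*) \\
    &\leq \mu_{i-1}(z_i) + \beta_i \sigma_{i-1}(z_i) \\
    &\leq J(z_i) + 2\beta_i \sigma_{i-1}(z_i) = J(z_i) + F_i.
\end{align}
The first inequality is the upper half of the confidence bound applied at $z^*$; the second is exactly the fact that $z_i$ maximizes the acquisition function $\mu_{i-1}(\cdot) + \beta_i \sigma_{i-1}(\cdot)$ over $\mathbb{Z}$ (Line~4), so its value at $z_i$ dominates its value at $z^*$; the third is the lower half of the confidence bound applied at $z_i$, namely $\mu_{i-1}(z_i) \leq J(z_i) + \beta_i \sigma_{i-1}(z_i)$. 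Rearranging gives $r_i = J^* - J(z_i) \leq F_i$.

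Since the entire chain is valid on an event of probability at least $1-\delta$, I would conclude $\prob_{\pi}[r_i \leq F_i] \geq 1-\delta$, which is the claim. I expect the only subtle point, rather than a genuine obstacle, to be bookkeeping of the probability: Proposition~\ref{prop:gpr_bound} already furnishes a bound that is uniform over all iterations $i$ and all query points simultaneously, so a single conditioning on its good event suffices and no union bound over $i$ is needed. The remaining care is simply to ensure that the argmax $z^*$ is well defined (guaranteed by compactness and continuity) and that the confidence bound is applied at the two points $z^*$ and $z_i$ with the same scale factor $\beta_i$ used in the acquisition step, which is consistent with the definition of $\beta_i$ in~\eqref{eq:beta} and of $F_i$ in~\eqref{eq:F}.
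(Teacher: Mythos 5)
Your proposal is correct and follows essentially the same argument as the paper: both combine the confidence bound of Proposition~\ref{prop:gpr_bound} at the maximizer and at $z_i$ with the UCB optimality of $z_i$, the only difference being that you spell out the intermediate step through $z^*$ which the paper compresses into a single inequality. One trivial slip: the acquisition-maximization step is Line~3 of Algorithm~\ref{alg:algorithm}, not Line~4.
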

\begin{proof}
By definition of the simple regret $r_i$, the optimal sample $z_i$ (Line 3), the simple regret bound $F_i$, and the first inequality in Proposition~\ref{prop:gpr_bound}, we have the following:
\begin{align}
    r_i & = J^* - J(z_i), \\
    & \leq \beta_i \sigma_{i-1}(z_i) + \mu_{i-1}(z_i) - J(z_i), \withprob \geq 1-\delta \\
    & \leq 2 \beta_i \sigma_{i-1}(z_i) = F_i, \withprob \geq 1-\delta. 
\end{align}
\end{proof}

Our next lemma bounds the growth of $F_i$.
\spacing
\begin{lemma}
\label{lem:Fbound}
Let Assumption~\ref{assump:bayesian} hold and let $\delta \in (0,1]$. Then, 
\begin{equation}
    \sum_{i=1}^j F_i \leq O\left(\sqrt{j}\left(B\sqrt{\gamma_j} + R \sqrt{\gamma_j\left(\gamma_j + \ln{\frac{1}{\delta}}\right)}\right)\right)
\end{equation}
with probability $\geq 1-\delta$, with respect to the Gaussian Process $\pi$ (Line 10),  and with $F_i$ as in~\eqref{eq:F}.
\end{lemma}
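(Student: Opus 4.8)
The plan is to peel apart the two factors in $F_i = 2\beta_i\sigma_{i-1}(z_i)$, treating the scale factor $\beta_i$ and the posterior standard deviation $\sigma_{i-1}(z_i)$ separately and then recombining them through a Cauchy--Schwarz argument. First, since every $\beta_i$ with $i \leq j$ is dominated by the same quantity via the second inequality of Proposition~\ref{prop:gpr_bound}, I would factor the scale factor out of the sum:
\begin{equation}
    \sum_{i=1}^j F_i = \sum_{i=1}^j 2\beta_i\sigma_{i-1}(z_i) \leq 2\left(B + R\sqrt{2\left(\gamma_j + 1 + \ln\frac{1}{\delta}\right)}\right)\sum_{i=1}^j \sigma_{i-1}(z_i),
\end{equation}
which holds with probability $\geq 1-\delta$ since the $\beta_i$ bound is inherited from Proposition~\ref{prop:gpr_bound}.

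Next I would control $\sum_{i=1}^j \sigma_{i-1}(z_i)$. By Cauchy--Schwarz, pairing the all-ones vector with $(\sigma_{i-1}(z_i))_{i=1}^j$,
\begin{equation}
    \sum_{i=1}^j \sigma_{i-1}(z_i) \leq \sqrt{j}\,\sqrt{\sum_{i=1}^j \sigma_{i-1}^2(z_i)}.
\end{equation}
The crux is then to bound the sum of posterior variances by the maximum information gain. Using that the information gain from the sampled points equals $\frac{1}{2}\sum_{i=1}^j \ln\!\left(1 + \lambda^{-1}\sigma_{i-1}^2(z_i)\right)$ together with the elementary inequality $x \leq \frac{C}{\ln(1+C)}\ln(1+x)$ valid for $0 \leq x \leq C$ (here $C$ is a uniform upper bound on $\lambda^{-1}\sigma_{i-1}^2(z_i)$, guaranteed by the boundedness of $\sigma_0$ over the compact set $\mathbb{Z}$), one obtains $\sum_{i=1}^j \sigma_{i-1}^2(z_i) \leq C_1\gamma_j$ for a constant $C_1$ independent of $j$; this is precisely the step established in~\cite{srinivas2009gaussian}. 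Substituting yields $\sum_{i=1}^j \sigma_{i-1}(z_i) \leq \sqrt{C_1}\,\sqrt{j\gamma_j}$.

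Finally I would combine the two displays and distribute the $\sqrt{j\gamma_j}$ factor across the two terms of the $\beta$-bound:
\begin{equation}
    \sum_{i=1}^j F_i \leq 2\sqrt{C_1}\left(B\sqrt{j}\,\sqrt{\gamma_j} + R\sqrt{2}\,\sqrt{j}\,\sqrt{\gamma_j\left(\gamma_j + 1 + \ln\frac{1}{\delta}\right)}\right),
\end{equation}
and then absorb the absolute constants $2\sqrt{C_1}$ and $\sqrt{2}$ as well as the additive $1$ into the big-$O$ to recover the claimed bound, with the $\geq 1-\delta$ probability carried through from Proposition~\ref{prop:gpr_bound}. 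I expect the main obstacle to be the variance-sum step: justifying $\sum_{i=1}^j \sigma_{i-1}^2(z_i) \leq C_1\gamma_j$ requires the information-theoretic identity for the Gaussian posterior together with a uniform bound on the per-step variances so that the logarithmic inequality applies with a single constant. Everything else is bookkeeping, namely the uniform bounding of $\beta_i$, a single Cauchy--Schwarz step, and collecting constants into the big-$O$.
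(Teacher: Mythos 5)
Your proof is correct and takes essentially the same route as the paper: the paper likewise factors the uniform bound on $\beta_i$ (the second inequality of Proposition~\ref{prop:gpr_bound}) out of $\sum_{i=1}^j F_i$, and then invokes Lemma~4 of~\cite{chowdhury2017kernelizedarxiv}, which states exactly $\sum_{i=1}^j \sigma_{i-1}(z_i) \leq O\left(\sqrt{j\gamma_j}\right)$. The only difference is that you inline the proof of that cited lemma (Cauchy--Schwarz plus the information-gain identity bounding $\sum_{i=1}^j \sigma_{i-1}^2(z_i)$ by a constant times $\gamma_j$), which is precisely how it is established in the literature, so your argument is a more self-contained rendering of the same proof.
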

\begin{proof}
From the definition of the simple regret bound $F_i$ and the second inequality in Proposition~\ref{prop:gpr_bound}, we have
\begin{equation}
    \sum_{i=1}^j F_i \leq 2\left(B + R \sqrt{2 \left(\gamma_j + 1 + \ln{\frac{1}{\delta} }\right)}\right) \sum_{i=1}^j \sigma_{i-1}(z_i),
\end{equation}
with probability $\geq 1-\delta$.
The result then stems via Lemma 4 in~\cite{chowdhury2017kernelizedarxiv}, which states that $\sum_{i=1}^j \sigma_{i-1}(z_i) \leq O\left(\sqrt{j\gamma_j}\right)$.
\end{proof}

With these results, we can prove Theorem~\ref{thm:algorithm}.

\begin{proof}
The proof requires two parts.  First we need to prove that Algorithm~\ref{alg:algorithm} is guaranteed to terminate, and second, that the upper bound $\epsilon$ satisfies the desired inequality.  For the first part, we need to show that $\exists~i^* < \infty$ such that $F_{i^*} \leq \alpha$.  The proof of this follows a contradiction.  Assume instead that $\nexists~i^* < \infty$ such that $F_{i^*} \leq \alpha$, \textit{i.e.}, $F_i > \alpha~,\forall~i \geq 0$.
% \begin{equation}
%     F_i > \alpha~,\forall~i \geq 0.
% \end{equation}
Then consider the running average of $F_i$ and Lemma~\ref{lem:Fbound}:
\begin{align}
    \alpha & < \lim_{j \to \infty}~\frac{1}{j}\sum_{i=1}^j F_i, \\
    & \leq \lim_{j \to \infty} O\left(\frac{\sqrt{j}\left(B\sqrt{\gamma_j} + R\sqrt{\gamma_j\left(\gamma_j + \ln{\frac{1}{\delta}}\right)}\right)}{j}\right).
\end{align}
Now, pick a kernel that satisfies the inequality in Proposition~\ref{prop:bound_information}, which is guaranteed to exist.  Then,
\begin{equation}
    \alpha < \lim_{j \to \infty} O\left(\frac{j^z \log(j)}{j}\right) = 0,~\mathrm{as}~z < 1,
\end{equation}
which is a contradiction, as $\alpha \in \mathbb{R}_{++}$.  This proves termination at some $i^* < \infty$.  It remains to identify an upper bound $\epsilon$ that satisfies the required inequality in Theorem~\ref{thm:algorithm}.

By Lemma~\ref{lem:bound_simple_regret}, the definition of simple regret $r_i$, and our noisy samples $y_i$, we have with probability $\geq 1-\delta$,
\begin{align}
    \label{eq:alpha_bound}
    \alpha \geq F_{i^*} \geq r_{i^*} = J^* - J(z_{i^*}) = J^* - y_{i^*} - \xi_{i^*},
\end{align}
for the unknown noise $\xi_{i^*}$.  As $\xi_{i^*}$ is assumed to be $R$-sub Gaussian however, then via Mill's Inequality we have that 
\begin{equation}
 \prob[\xi_{i^*} \leq c] \geq 1-\frac{R}{c\sqrt{2\pi}}e^{-\frac{c^2}{2R^2}}
\end{equation}
and as a result,
\begin{equation}
    \prob_{\pi}\left[f^* \leq \alpha + y_{i^*} + c\right]\geq \Delta(c,\delta,R).
\end{equation}
Defining $\epsilon = \alpha + y_{i^*} + c$ completes the proof.
\end{proof}

\section{Learning Performance Bounds}
\label{sec:accuracy}
In this section, we will use Algorithm~\ref{alg:algorithm} and Theorem~\ref{thm:algorithm} to bound the solutions to optimization problems~\eqref{eq:sim_opt} and subsequently lower bound the true system robustness risk measure $\rho^*$ \textit{without directly testing the true system} - this is our main result.  To do so, we require one assumption for the robustness measure $\rho$ for our system's STL specification $\psi$.
\begin{assumption}
\label{assump:lipschitz_spec}
The robustness measure $\rho$ associated with the STL specification $\psi$ is $(L,\|\cdot\|)$-partially Lipschitz, and maps signals to a bounded region on the real line, \textit{i.e.} $\rho: \signalspace \times \mathbb{R}_{+} \to [-m,M]$, where $m,M \in \mathbb{R}_{++}$.
\end{assumption}
In Assumption~\ref{assump:lipschitz_spec} above, we assume the (semi)norm with respect to which $\rho$ is partially Lipschitz is the same (semi)norm over which $e$ is defined in equation~\eqref{eq:sim_gap}.  Also, bounding robustness measures is not too restrictive, as any bounded robustness measure still satisfies their definition as per equation~\eqref{eq:robMeasure}.  Then we have our second result.
\begin{theorem}
\label{thm:performance_bound}
Let the specification $\psi$ satisfy Assumption~\ref{assump:lipschitz_spec}, and let the optimization problems~\eqref{eq:sim_opt} satisfy Assumption~\ref{assump:bayesian} with sub-Gaussian noise bounds $R_1,R_2$ respectively.  There exist $\Tilde \rho, \Tilde e, \ell \in \mathbb{R}$, $\delta_1,\delta_2\in (0,1]$, and $c_1,c_2 \in \mathbb{R}_{++}$ such that
\begin{gather}
    \prob_{\pi_1}[\hat \rho^* \geq \Tilde \rho] \geq \Delta(c_1,\delta_1,R_1),\\
    \prob_{\pi_2}[e^* \leq \Tilde e] \geq \Delta(c_2,\delta_2,R_2), \\
    \prob_{\pi_1,\pi_2}\left[\rho^* \geq \ell \right] \geq \Delta(c_1,\delta_1,R_1)\Delta(c_2,\delta_2,R_2).
\end{gather}
Here, $\hat \rho^*$ and $e^*$ are from equation~\eqref{eq:sim_opt}, $\rho^*$ is from equation~\eqref{eq:robust_real}, and $\Delta$ is from equation~\eqref{eq:probfac}.
\end{theorem}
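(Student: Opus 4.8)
The plan is to obtain the first two bounds by direct application of Theorem~\ref{thm:algorithm} and then chain a purely deterministic inequality to transfer them onto $\rho^*$. The optimization for $e^* = \max_{d}e(d)$ is already a maximization, so running Algorithm~\ref{alg:algorithm} on $J = e$ with parameters $c_2,\delta_2$ and sub-Gaussian bound $R_2$ returns an upper bound $\Tilde e$ satisfying $\prob_{\pi_2}[e^* \leq \Tilde e] \geq \Delta(c_2,\delta_2,R_2)$. For $\hat\rho^* = \min_d \hat\rho(d)$ I would negate the objective: since $\min_d \hat\rho(d) = -\max_d(-\hat\rho(d))$, applying Algorithm~\ref{alg:algorithm} to $J = -\hat\rho$ with parameters $c_1,\delta_1,R_1$ yields an upper bound $\epsilon_1$ on $-\hat\rho^*$ with probability $\geq \Delta(c_1,\delta_1,R_1)$, and setting $\Tilde \rho = -\epsilon_1$ gives $\prob_{\pi_1}[\hat\rho^* \geq \Tilde\rho] \geq \Delta(c_1,\delta_1,R_1)$. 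One must check that $-\hat\rho$ still satisfies Assumption~\ref{assump:bayesian}, which holds since negation preserves both the RKHS-norm bound and the sub-Gaussian character of the sample noise.

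The core of the argument is the deterministic lower bound $\rho^* \geq \hat\rho^* - L\,e^* - r\tfrac{M+m}{2}$, built from three ingredients. First, Popoviciu's inequality together with Assumption~\ref{assump:lipschitz_spec} (which forces $\rho \in [-m,M]$) gives $\sqrt{\var(\rho(\Sigma(d),T))} \leq \tfrac{M+m}{2}$ for every $d$, so the risk penalty is uniformly bounded, $-r\sqrt{\var(\cdot)} \geq -r\tfrac{M+m}{2}$. Second, the partial-Lipschitz property of $\rho$ yields the pointwise bound $\rho(\Sigma(d),T) \geq \rho(\hat\Sigma(d),T) - L\|\Sigma(d)-\hat\Sigma(d)\|$; taking expectations under the joint law $\Pi_{nom}(d),\Pi_{true}(d)$ and identifying the two resulting expectations as $\hat\rho(d)$ and $e(d)$ gives $\expect[\rho(\Sigma(d),T)] \geq \hat\rho(d) - L\,e(d)$. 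Third, minimizing over $d$ and using $\hat\rho(d) \geq \hat\rho^*$ and $e(d) \leq e^*$ (recall $L>0$) collapses this to $\min_d \expect[\rho(\Sigma(d),T)] \geq \hat\rho^* - L\,e^*$. Assembling the three steps produces the deterministic inequality above.

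To finish, I would substitute the two high-probability events into this inequality. On the intersection of $\{\hat\rho^* \geq \Tilde\rho\}$ and $\{e^* \leq \Tilde e\}$, the deterministic bound reads $\rho^* \geq \Tilde\rho - L\Tilde e - r\tfrac{M+m}{2} =: \ell$. Since $\pi_1$ and $\pi_2$ are fit to separate datasets and are therefore independent, this intersection has probability $\geq \Delta(c_1,\delta_1,R_1)\Delta(c_2,\delta_2,R_2)$, which is exactly the claimed $\prob_{\pi_1,\pi_2}[\rho^* \geq \ell]$.

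I expect the main obstacle to be the partial-Lipschitz step: one must ensure the (semi)norm in Assumption~\ref{assump:lipschitz_spec} is \emph{identically} the one defining $e(d)$ in~\eqref{eq:sim_gap} (the paper flags this alignment), and that $\Sigma(d)$ and $\hat\Sigma(d)$ are coupled consistently with the joint expectation in~\eqref{eq:sim_gap} so that linearity of expectation applies cleanly to the pointwise inequality. The remaining subtlety is justifying the \emph{product} of probabilities through independence of the two Gaussian Processes, rather than settling for a looser union bound.
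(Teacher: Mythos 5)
Your proposal is correct and reaches the paper's conclusion through the same overall skeleton: $\tilde\rho$ and $\tilde e$ come from Theorem~\ref{thm:algorithm} exactly as in the paper's Lemmas~\ref{lem:existence_lowerbound} and~\ref{lem:existence_upperbound} (negation trick for the minimization, direct application for $e^*$), Popoviciu's inequality absorbs the variance penalty, and the final guarantee is the product of the two $\Delta$ factors. Where you genuinely diverge is the Lipschitz comparison step. The paper works with the two-sided quantity $|\hat\rho^* - r^*|$: it bounds each of $r^* - \hat\rho^*$ and $\hat\rho^* - r^*$ by a maximum of expected differences (using that a difference of minima is at most the maximum of the difference), invokes a ``without loss of generality one difference is nonnegative'' argument, and only then applies partial Lipschitz continuity to get $|\hat\rho^* - r^*| \leq L e^*$. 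You instead use the one-sided pointwise inequality $\rho(\Sigma(d),T) \geq \rho(\hat\Sigma(d),T) - L\|\Sigma(d)-\hat\Sigma(d)\|$, take expectations under the joint law so that the two terms become $\hat\rho(d)$ and $e(d)$, and minimize over $d$ to get $r^* \geq \hat\rho^* - L e^*$ directly. This buys a cleaner argument: it proves only the inequality that is actually needed, and it dispenses with the absolute values and the sign-WLOG step. It also surfaces a small slip in the paper: the paper's closing definition $\ell = \tilde\rho - \tilde e - r\frac{M+m}{2}$ drops the Lipschitz constant (harmless in their example, where $L=1$), whereas your $\ell = \tilde\rho - L\tilde e - r\frac{M+m}{2}$ is the correct general form, consistent with the inequality $\prob_{\pi_1,\pi_2}[r^* \geq \tilde\rho - L\tilde e] \geq \Delta(c_1,\delta_1,R_1)\Delta(c_2,\delta_2,R_2)$ that both arguments establish. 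Finally, on multiplying the probabilities: the paper also simply takes the product without comment, so your explicit appeal to independence of $\pi_1$ and $\pi_2$ (fit to separate datasets) is a fair, and if anything more honest, reading of that step.
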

To clarify, the probability in Theorem~\ref{thm:performance_bound} is over the Gaussian Processes, $\pi_1$ and $\pi_2$, fit to datasets generated from the functions $\hat \rho$ and $e$, respectively.  More specifically then, Theorem~\ref{thm:performance_bound} states that if these functions satisfy Assumption~\ref{assump:bayesian}, then we can bound the solutions to optimization problems~\eqref{eq:sim_opt} and construct a lower bound on the risk measure $\rho^*$, to some minimum probability.  Proving Theorem~\ref{thm:performance_bound} requires two Lemmas, proving existence of $\Tilde \rho$ and $\Tilde e$.
\begin{lemma}
\label{lem:existence_lowerbound}
Let $\hat \rho^*$ be as defined in equation~\eqref{eq:sim_opt} with its optimization problem satisfying Assumption~\ref{assump:bayesian} with sub-Gaussian noise bound $R$. For any $\delta \in (0,1]$ and $\alpha,c \in \mathbb{R}_{++}$, Algorithm~\ref{alg:algorithm} will output $\Tilde \rho$ at termination, such that $\prob_{\pi}[\hat \rho^* \geq \Tilde \rho] \geq \Delta(c,\delta,R)$, with $\Delta$ as in equation~\eqref{eq:probfac}.
\end{lemma}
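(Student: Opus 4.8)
The plan is to reduce the minimization problem defining $\hat\rho^*$ to a maximization problem of the form~\eqref{eq:base_bayesian}, so that Theorem~\ref{thm:algorithm} applies with essentially no new work. First I would introduce the surrogate objective $J(d) = -\hat\rho(d)$ over the feasible set $\mathbb{Z} = \mathcal{D}$. Negation converts the minimum into a maximum, since $J^* = \max_{d \in \mathcal{D}} J(d) = -\min_{d \in \mathcal{D}} \hat\rho(d) = -\hat\rho^*$. Consequently, any high-probability upper bound on $J^*$ furnished by Algorithm~\ref{alg:algorithm} transfers directly into a lower bound on $\hat\rho^*$ by flipping signs.

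The substantive step is to verify that $J$ inherits Assumption~\ref{assump:bayesian} from $\hat\rho$. The feasible region $\mathcal{D}$ is unchanged and therefore remains compact and convex. Because RKHS norms are absolutely homogeneous, $\|J\|_{RKHS} = \|{-\hat\rho}\|_{RKHS} = \|\hat\rho\|_{RKHS} \leq B$, so the RKHS bound holds with the identical constant. Finally, if $y_i = \hat\rho(d_i) + \xi_i$ is a noisy sample of $\hat\rho$ in the sense of~\eqref{eq:sampling_criteria} with $\xi_i$ being $R$-sub-Gaussian, then $-y_i = J(d_i) + (-\xi_i)$ is a valid noisy sample of $J$, and $-\xi_i$ is again $R$-sub-Gaussian by the symmetry of the sub-Gaussian property. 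Hence the optimization problem for $J$ satisfies every hypothesis of Theorem~\ref{thm:algorithm} with the same $B$ and $R$.

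With these checks in hand, I would invoke Theorem~\ref{thm:algorithm} by running Algorithm~\ref{alg:algorithm} on $J$ (equivalently, feeding the algorithm negated samples of $\hat\rho$) with the prescribed $\delta$, $\alpha$, and $c$. The theorem guarantees termination at some finite $i^*$ and an output $\epsilon = y_{i^*} + \alpha + c$ obeying $\prob_{\pi}[J^* \leq \epsilon] \geq \Delta(c,\delta,R)$. Setting $\Tilde\rho = -\epsilon$ and substituting $J^* = -\hat\rho^*$, the event $\{J^* \leq \epsilon\}$ coincides exactly with $\{\hat\rho^* \geq \Tilde\rho\}$, which delivers the claimed inequality $\prob_{\pi}[\hat\rho^* \geq \Tilde\rho] \geq \Delta(c,\delta,R)$.

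I do not anticipate a genuine obstacle, as the lemma amounts to a sign-flip reduction to the already-established Theorem~\ref{thm:algorithm}. The only point demanding any care is confirming that negation preserves both the sub-Gaussian noise bound and the RKHS norm bound; both follow from elementary symmetry and homogeneity arguments and hold with the identical constants $B$ and $R$, so no quantitative degradation is incurred.
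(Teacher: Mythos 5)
Your proof is correct and follows the same route as the paper's own argument: negate $\hat\rho$ to convert the minimization into a maximization of the form~\eqref{eq:base_bayesian}, apply Theorem~\ref{thm:algorithm} to obtain an upper bound $\epsilon$ on $-\hat\rho^*$, and set $\Tilde\rho = -\epsilon$. You actually supply more detail than the paper, which simply asserts that the maximization equivalent satisfies Assumption~\ref{assump:bayesian}, whereas you verify the RKHS-norm and sub-Gaussian invariance under negation explicitly.
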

\begin{proof}
This is a direct application of Theorem~\ref{thm:algorithm}.  As we know that the optimization problem corresponding to $\hat \rho^*$ satisfies Assumption~\ref{assump:bayesian}, so too does its maximization equivalent also satisfy Assumption~\ref{assump:bayesian}, \textit{i.e.},
\begin{equation}
    \hat \rho^* = \min_{d \in \mathcal{D}}~\hat \rho(d) = - \max_{d \in \mathcal{D}}~-\hat \rho(d).    
\end{equation}
Then, choose a $\delta \in (0,1]$ and $\alpha,c \in \mathbb{R}_{++}$.  For these constants, Theorem~\ref{thm:algorithm} guarantees existence of an $\epsilon$ such that
\begin{equation}
    \prob_{\pi}[-\hat \rho^* \leq \epsilon] \geq \Delta(c,\delta,R).
\end{equation}
Defining $\Tilde \rho = -\epsilon$ concludes the proof.
\end{proof}

\spacing
\begin{lemma}
\label{lem:existence_upperbound}
Let $e^*$ be as defined in equation~\eqref{eq:sim_opt} with its optimization problem satisfying Assumption~\ref{assump:bayesian} with sub-Gaussian noise bound $R$.  For any $\delta \in (0,1]$ and $\alpha,c \in \mathbb{R}_{++}$ Algorithm~\ref{alg:algorithm} will output $\Tilde e$ at termination, such that $\prob_{\pi}[e^* \leq \Tilde e] \geq \Delta(c,\delta,R)$, with $\Delta$ as in equation~\eqref{eq:probfac}.
\end{lemma}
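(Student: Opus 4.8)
The plan is to recognize that Lemma~\ref{lem:existence_upperbound} is an even more direct application of Theorem~\ref{thm:algorithm} than the preceding Lemma~\ref{lem:existence_lowerbound}. The key observation is that $e^*$, as defined in equation~\eqref{eq:sim_opt}, is already expressed as a maximization problem, $e^* = \max_{d \in \mathcal{D}} e(d)$. This matches the canonical form~\eqref{eq:base_bayesian} verbatim upon setting $J = e$ and $\mathbb{Z} = \mathcal{D}$, so none of the $-\max(-\cdot)$ sign-flipping used for $\hat\rho^*$ in Lemma~\ref{lem:existence_lowerbound} is required.

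First I would verify that the hypotheses of Theorem~\ref{thm:algorithm} are satisfied. By assumption, the optimization problem for $e^*$ satisfies Assumption~\ref{assump:bayesian} with sub-Gaussian noise bound $R$, which is exactly what Theorem~\ref{thm:algorithm} requires of its objective $J$. The constants $\delta \in (0,1]$ and $\alpha, c \in \mathbb{R}_{++}$ are supplied by the hypothesis, matching the required inputs to Algorithm~\ref{alg:algorithm}.

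Then I would invoke Theorem~\ref{thm:algorithm} directly on $J = e$: it guarantees that Algorithm~\ref{alg:algorithm} terminates at some finite iteration $i^*$ and outputs $\epsilon = y_{i^*} + \alpha + c$ satisfying $\prob_{\pi}[e^* \leq \epsilon] \geq \Delta(c,\delta,R)$. Defining $\Tilde e = \epsilon$ immediately yields the claimed bound $\prob_{\pi}[e^* \leq \Tilde e] \geq \Delta(c,\delta,R)$, completing the proof.

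I do not anticipate any genuine obstacle; the statement is an immediate corollary of Theorem~\ref{thm:algorithm} once one notes that $e^*$ is natively a maximization. The only point requiring minor care is confirming that the (semi)norm-based objective $e$ meets Assumption~\ref{assump:bayesian}, namely that it lies in the relevant RKHS with bounded norm and has $R$-sub-Gaussian measurement noise; but this is granted in the hypothesis rather than something to be established within the proof itself.
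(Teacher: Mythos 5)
Your proposal is correct and matches the paper's own proof, which simply observes that this is a direct consequence of Theorem~\ref{thm:algorithm} since the optimization problem for $e^*$ already satisfies Assumption~\ref{assump:bayesian}; you merely spell out the instantiation ($J = e$, $\mathbb{Z} = \mathcal{D}$, $\Tilde e = \epsilon$) that the paper leaves implicit. Your observation that no sign-flip is needed, in contrast to Lemma~\ref{lem:existence_lowerbound}, is exactly why the paper's proof here is one sentence long.
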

\begin{proof}
This is a direct consequence of Theorem~\ref{thm:algorithm}, as the optimization problem for $e^*$ satisfies Assumption~\ref{assump:bayesian}.
\end{proof}

Then, the proof of Theorem~\ref{thm:performance_bound} is as follows:

\begin{proof}
(Of Theorem~\ref{thm:performance_bound}) To start, via Assumption~\ref{assump:lipschitz_spec} and Popoviciu's inequality~\cite{popoviciu1935equations}, we have that:
\begin{align}
    \rho^* & \geq\min_{d \in \mathcal{D}}~ \expect_{\Pi_{true}(d)}[\rho(\Sigma(d),T)] - \frac{r(M+m)}{2}, \\
    & = r^* - \frac{r(M+m)}{2}. \label{eq:pop_bound}
\end{align}
Then,
\begin{align}
    & r^* - \hat \rho^* \\
    & = \min_{d \in \mathcal{D}}~\expect_{\Pi_{true}(d)}\left[\rho\left(\Sigma(d),T\right)\right] - \min_{d \in \mathcal{D}}~\hat \rho(d), \\
    & \leq \max_{d \in \mathcal{D}}~ \expect_{\Pi_{true}(d)}\left[\rho\left(\Sigma(d),T\right)\right] - \hat \rho(d), \\
    & \leq \max_{d \in \mathcal{D}}~\expect_{\Pi_{nom}(d), \Pi_{true}(d)}\left[\rho\Big(\Sigma(d),T\Big) - \rho\left(\hat \Sigma(d),T\right) \right]. \\[-0.35 in]
\end{align}
To simplify notation in the remainder of the proof, we will abbreviate the expectation with respect to both distributions with an expectation with respect to a joint distribution $\Pi$,  \textit{i.e.} $\expect_{\Pi_{nom}(d), \Pi_{true}(d)}[\cdot] = \expect_{\Pi(d)}[\cdot]$.
This simplifies presentation of a similar inequality for $\hat \rho^* - r^*$ as well; specifically,
\begin{equation}
    \hat \rho^* - r^* \leq \max_{d \in \mathcal{D}}~\expect_{\Pi(d)}\left[\rho\left(\hat \Sigma(d),T\right) - \rho\Big(\Sigma(d),T\Big)\right].
\end{equation}
Without loss of generality, we can assume one of the differences is positive, \textit{i.e.} either $r^* - \hat \rho^* \geq 0$ or $\hat \rho^* - r^* \geq 0$. Taking the absolute value of both sides of the inequality for which the aforementioned difference is positive nets the following result, via partial Lipschitz continuity of $\rho$:
\begin{align}
    |\hat \rho^* - r^*| & \leq \left|\max_{d \in \mathcal{D}}~\expect_{\Pi(d)}\left[\rho\Big(\Sigma(d),T\Big) - \rho\left(\hat \Sigma(d),T\right) \right] \right|, \\
    & \leq L \max_{d \in \mathcal{D}}~\expect_{\Pi(d)}\left[\left\|\Sigma(d) - \hat \Sigma(d) \right\| \right].
\end{align}
Then, pick $\delta_1, \delta_2 \in (0,1]$ and $\alpha_1,\alpha_2,c_1,c_2 \in \mathbb{R}_{++}$.  Then with $\delta_2$ and $\alpha_2$, we know $\exists~\Tilde e$ via Lemma~\ref{lem:existence_upperbound} such that
\begin{equation}
    \prob_{\pi_2}\left[|\hat \rho^* - r^*| \leq L\Tilde e\right] \geq \Delta(c_2,\delta_2,R_2)
\end{equation}
To be specific with the Gaussian Processes used throughout the proof, we note that $\pi_2$ defines a Gaussian Process fit to a dataset of noisy samples of the difference function $e$, as in~\eqref{eq:sim_gap}.  Then, with $\delta_1$ and $\alpha_1$ and by Lemma~\ref{lem:existence_lowerbound}:
\begin{equation}
    \prob_{\pi_1,\pi_2}\left[r^* \geq \Tilde \rho - L\Tilde e\right] \geq \Delta(c_1,\delta_1,R_1)\Delta(c_2,\delta_2,R_2).
\end{equation}
Here, $\pi_1$ defines a Gaussian Process fit to a dataset of noisy samples of the expected nominal robustness function $\hat \rho$, as in~\eqref{eq:nom_expect_rob}.  Then, the result follows from equation~\eqref{eq:pop_bound} and defining $\ell = \Tilde \rho - \Tilde \epsilon - \frac{r(M+m)}{2}$:
\begin{equation}
    \prob_{\pi_1,\pi_2}\left[\rho^* \geq \ell\right] \geq \Delta(c_1,\delta_1,R_1)\Delta(c_2,\delta_2,R_2),
\end{equation}
which concludes the proof.
\end{proof}

% Finally, we can extend Theorem~\ref{thm:performance_bound} to provide a verification statement on the true system.
% \begin{corollary}
% Under the same assumptions as Theorem~\ref{thm:performance_bound}, if $p \geq 0$, then the true system has been verified, in expectation, to minimum probability $1-\delta$, \textit{i.e.} if $\prob_{\pi_1,\pi_2}[\rho^* \geq p \geq 0] \geq 1-\delta$, then
% \begin{equation}
%   \expect_{\Pi_{true}(d)}[\Sigma(d,T)] \models \psi,~\forall~d\in\mathcal{D}, \withprob \geq 1-\delta.
% \end{equation}
% \end{corollary}
% \begin{proof}
% This stems directly from definition of the minimum, expected, true system robustness, $\rho^*$~\eqref{eq:robust_real}, and the robustness measure, $\rho$~\eqref{eq:robMeasure}.
% \end{proof}

\section{Results}
In this section we show how Algorithm~\ref{alg:algorithm} can be used to produce minimal upper bounds $\epsilon$ that satisfy the inequalities in Theorem~\ref{thm:algorithm} for a specific test function.  Additionally, we lower bound the risk measure $\rho^*$ of a high-fidelity ROS simulator of a Segway system satisfying a simple safety specification, and show that our process minimizes the number of true system tests required to realize this lower bound.

\subsection{Identifying Minimal Upper Bounds}
\label{sec:bounding_test}
To show efficacy of Algorithm~\ref{alg:algorithm}, we will use it to identify minimal upper bounds to a two-dimensional test function with noisy samples of the same function, defined below:
\begin{equation}
\label{eq:test_function}
J^* = \max_{z \in [0,5]^2 \in \mathbb{R}^2}~J(z) = \frac{\sin(z_1)\cos(z_2)}{2}.
\end{equation}
For Algorithm~\ref{alg:algorithm}, our samples $y_i = J(z_i) + \xi \sim \normal(0,0.001^2)$ as per equation~\eqref{eq:sampling_criteria}.  Then, we initialized Algorithm~\ref{alg:algorithm} fifty times with the presumed upper bound on the RKHS norm of $J$, $B = 0.25$, the sub-Gaussian noise bound $R = 0.005$, the probability requirement $\delta = 0.05$, and the required tolerances $\alpha = 0.015$ and $c = 0.01$.  We also used a Mat\'ern kernel, with $l=1,~\nu=10$, as this kernel ensures that our order growth rate for the maximum information gain $\gamma_j$ (as per~\eqref{eq:max_info_gain}) satisfies the required inequality in Proposition~\ref{prop:bound_information}.  Additionally, the Mat\'ern kernel is universal for any choice of parameters~\cite{srinivas2009gaussian}, ensuring that our function $J$ lies in the RKHS of our kernel $k$ - $J \in \mathcal{H}_k$. As can be seen in Figure~\ref{fig:termination}, in each of the fifty, independent cases wherein we ran Algorithm~\ref{alg:algorithm} to produce an upper bound $\epsilon$ to $J^*$ as in~\eqref{eq:test_function}, the simple regret bound $F_i$ decays to below the required tolerance $\alpha$.  Additionally, we know \textit{apriori} that solutions to optimization problem~\eqref{eq:test_function} $J^* = 0.5$.  As such, we expect all produced upper bounds $\epsilon \geq 0.5$.  Indeed, all produced upper bounds $\epsilon \in (0.5,0.53]$ with the distribution of bounds arising from the noisy sampling procedure.

\begin{figure}[t]
    \centering
    \includegraphics[width = 0.48\textwidth]{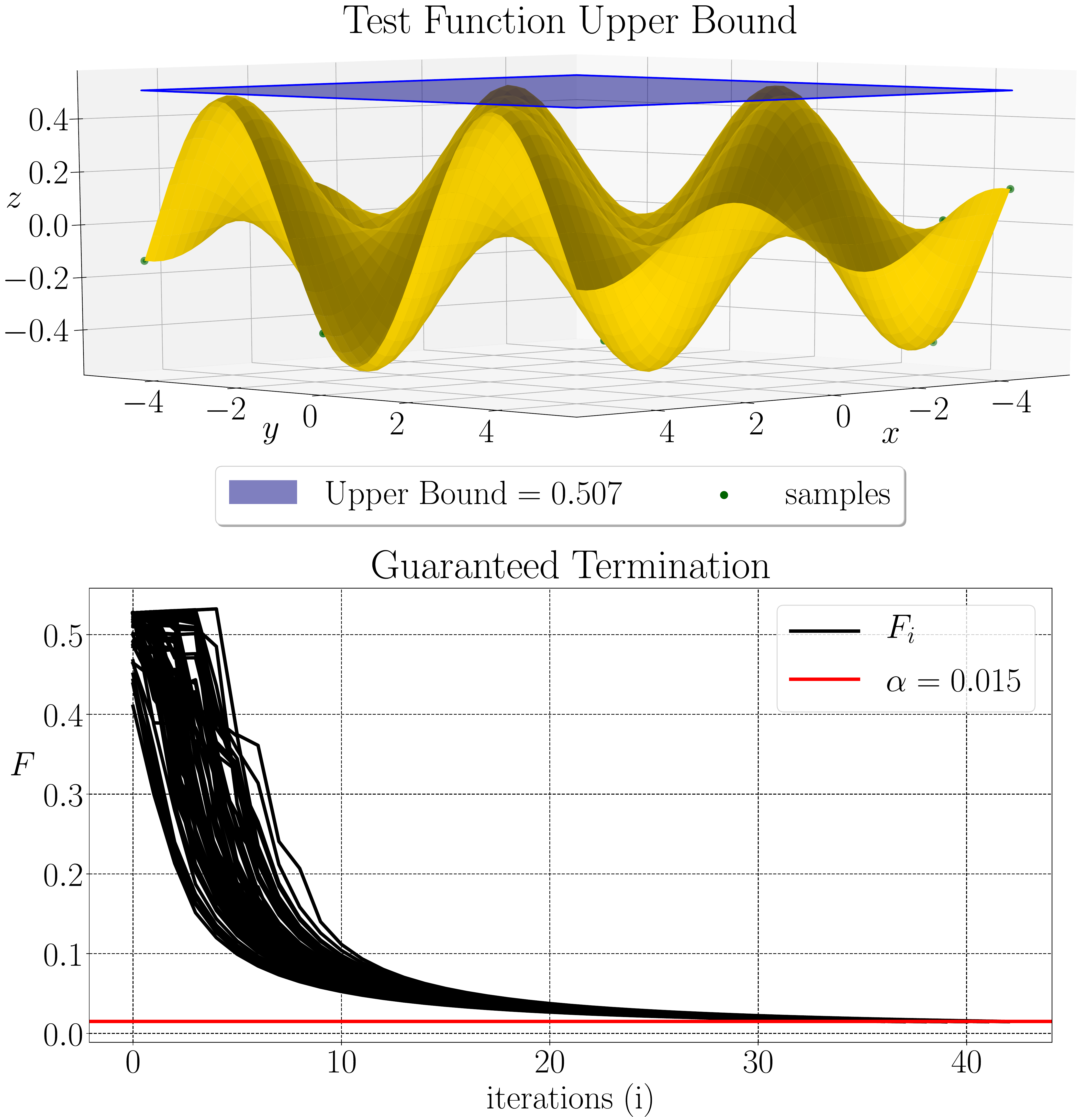}  \\
    \gapclose
    \caption{(Top) An example of the smallest upper bound, shown in blue, for the test function, $f$~\eqref{eq:test_function}, with the samples taken by the Bayesian Optimization algorithm shown in green. (Bottom) The trajectories of $F_i$~\eqref{eq:F} over fifty, independent runs of the same optimization problem, with the required tolerance, $\alpha = 0.015$, shown in red.}
    \label{fig:termination}
    \vspace{-0.5 cm}
\end{figure}

\subsection{A Safety-Critical Example}
\label{sec:bounding_robustness}
In this subsection, we leverage the results of Theorem~\ref{thm:performance_bound} to lower bound the risk measure for a Segway attempting to satisfy a safety specification.  Our nominal system is a high-fidelity ROS-based simulator of a Segway.  Our true system is the same simulator wherein we perturb the Segway's planar initial position $(x_0,y_0)$, its initial heading angle $\omega_0$, and its initial pendulum angle $\phi_0$.  Figure~\ref{fig:setup} shows our setup where the goal is to ensure the Segway's pendulum angle never deviates too far from the vertical, \textit{i.e.},
\begin{gather}
    s(t) = [x, y, \omega, \dot x, \dot y, \phi, \dot \phi]^T \in \mathbb{R}^7, \\
    \psi = \neg\left(\true \until_{[0,\infty)} \neg \left( |\phi| \leq 0.95 \mathrm{~rad}\right) \right), \\
    \rho(s,t) = 0.95 - \max_{0 \leq t' \leq t} |s_{\phi}(t')|.
\end{gather}

\noindent It is easy to verify that $\rho$ is $(1,\|\cdot\|_t)$-partially Lipschitz:
\begin{equation}
    |\rho(s,t) - \rho(z,t)| \leq \max_{0 \leq t' \leq t} |s_{\phi(t')} - z_{\phi(t')}| = \|s-z\|_t.
\end{equation}
Then, we note that we can arbitrarily bound the output of our robustness measure $\rho$ to lie within the region $[-m,M]$ where $m=0.05$ and $M=0.75$.  This still ensures our robustness measure meets its required definition in equation~\eqref{eq:robMeasure}, while maintaining the $(1,\|\cdot\|_t)$-partial Lipschitz property.

We will choose to evaluate our true system's minimum robustness risk measure $\rho^*$ at an evaluation time $T = 15$.  During this time, we will ask the Segway to navigate to a predefined goal $x_g = [2.5,2.5]$ - the center cell in the $5\times 5$ grid mention in Figure~\ref{fig:setup}.  Additionally, we expect the Segway to achieve this from anywhere in the feasible space.  Specifically, our set of variable phenomena $d$ is the planar initial condition of the Segway, \textit{i.e.} $d = s(0) = \hat s(0)$ and $d \in \mathcal{D} = [0,5]^2 \subset \mathbb{R}^2$.  Then our optimization problems~\eqref{eq:robust_real},~\eqref{eq:sim_opt} are as follows, with $r = 0.2$:
\begin{subequations}
\begin{align}
    \hspace{-0.2 in}\rho^* & = \min_{d \in \mathcal{D}}~\expect\left[\rho(\Sigma(d),15)\right] - \frac{\sqrt{\var\left(\rho(\Sigma(d),15)\right)}}{5}, \label{eq:example_true_robustness}\\
    \hspace{-0.2 in}\hat \rho^* & = \min_{d \in \mathcal{D}}~\expect_{\Pi_{nom}(d)}\left[\rho\left(\hat \Sigma(d),15\right)\right], \label{eq:example_sim_robustness}\\
    \hspace{-0.2 in}e^* & = \max_{d \in \mathcal{D}}~\expect_{\Pi(d)}\left[\left\|\Sigma(d) - \hat \Sigma(d) \right\|_{15}\right]. \label{eq:example_accuracy} \hspace{0.1 in}
\end{align}
\end{subequations}
Here, the moments for the first equation are with respect to the distribution $\Pi_{true}(d)$, and for the third equation, we abbreviated $\expect_{\Pi_{true}(d),\Pi_{nom}(d)}[\cdot] = \expect_{\Pi(d)}[\cdot]$.

Now we can use Theorem~\ref{thm:performance_bound} to construct a lower bound to $\rho^*$.  Doing so requires us to use Algorithm~\ref{alg:algorithm} to compute $i)$ a lower bound $\Tilde \rho$ to the minimum expected nominal robustness $\hat \rho^*$ and $ii)$ an upper bound $\Tilde \epsilon$ to the maximum expected signal difference $e$.  We approximate samples of each objective function, $\hat \rho$ and $e$, by calculating the robustness of one nominal trajectory and calculating the (semi)norm error between one nominal and true system trajectory, respectively.

\begin{table}[b]
\caption{Algorithm~\ref{alg:algorithm} parameters/results by optimization problem}
\centering
\begin{tabular}{|c|c|c|c|c|c|c|}
\hline
& $B$ & $R$ & $\delta$ & $\alpha$ & $c$ & bounds\\ \hline
\eqref{eq:example_true_robustness} & 0.1 & 0.15 & 0.05 & 0.05 & 0.3 & $\rho^* \geq 0.31$ \\ \hline
\eqref{eq:example_sim_robustness} & 0.2 & 0.1 & 0.05 & 0.05 & 0.2 & $\Tilde \rho = 0.46$ \\ \hline
\eqref{eq:example_accuracy} & 0.1 & 0.05 & 0.05 & 0.01 & 0.1 & $\Tilde e = 0.38$ \\ \hline 
\end{tabular}
\label{table:params}
\end{table}

\begin{figure}[t]
    \centering
    \includegraphics[width = 0.48\textwidth]{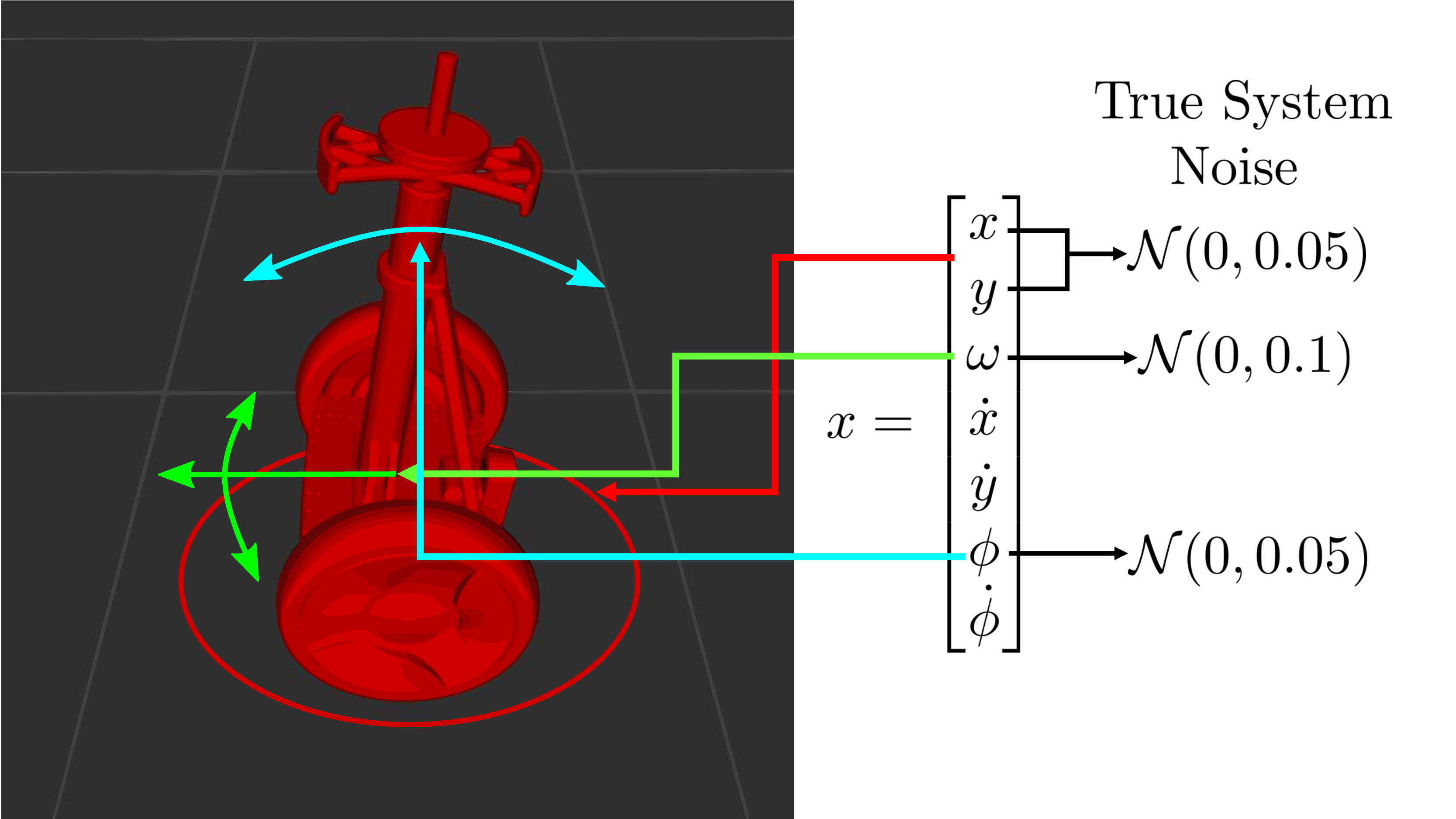}\\
    \vspace{-0.1 in}
    \caption{Example of the simulation environment used in Section~\ref{sec:bounding_robustness}.  The true system is always initialized to the center of one cell in a $5\times 5$ grid, facing directly left - heading, $\omega = 0$ - and with its pendulum upright - angle, $\phi = 0$.  The rightmost column indicates the additive noise with which the true system is initialized, \textit{i.e.} its initial condition on the plane is corrupted by zero-mean Gaussian Noise, $\xi \sim \mathcal{N}(0,0.05)$.}
    \label{fig:setup}
    \vspace{-0.2 in}
\end{figure}

Table~\ref{table:params} shows the parameters and results of applying Algorithm~\ref{alg:algorithm} to solve optimization problems~\eqref{eq:example_true_robustness}-\eqref{eq:example_accuracy}. As per Theorem~\ref{thm:performance_bound}, the results of Table~\ref{table:params} indicate a lower bound $\ell = \Tilde \rho - \Tilde e - r(M+m)/2 =0.46 - 0.38 - 0.08 = 0$ implying
\begin{align}
    \label{eq:calculated_bound}
    \hspace{-0.3 in}\prob_{\pi_1,\pi_2}[\rho^* \geq 0] & \geq \Delta(0.2,0.05,0.1)\Delta(0.1,0.05,0.05) \\
    & \geq 0.84.
\end{align}
We can also compare the bound above with the lower bound generated via direct application of Algorithm~\ref{alg:algorithm} to solve optimization problem~\eqref{eq:example_true_robustness}.
\begin{equation}
    \label{eq:true_bound}
    \prob_{\pi}[\rho^* \geq 0.31] \geq \Delta(0.3,0.05,0.15) \geq 0.92
\end{equation}
As the calculated lower bound in equation~\eqref{eq:true_bound}, $0.31$, is greater than the true lower bound in equation~\eqref{eq:calculated_bound}, $0$, and the calculated probability bound for equation~\eqref{eq:calculated_bound}, $0.84$, is less than the true probability bound for equation~\eqref{eq:true_bound}, $0.92$, this shows that the lower bound generated via our procedure is indeed expressive of true system phenomena.

\begin{figure}[t]
    \centering
    \includegraphics[width = 0.48\textwidth]{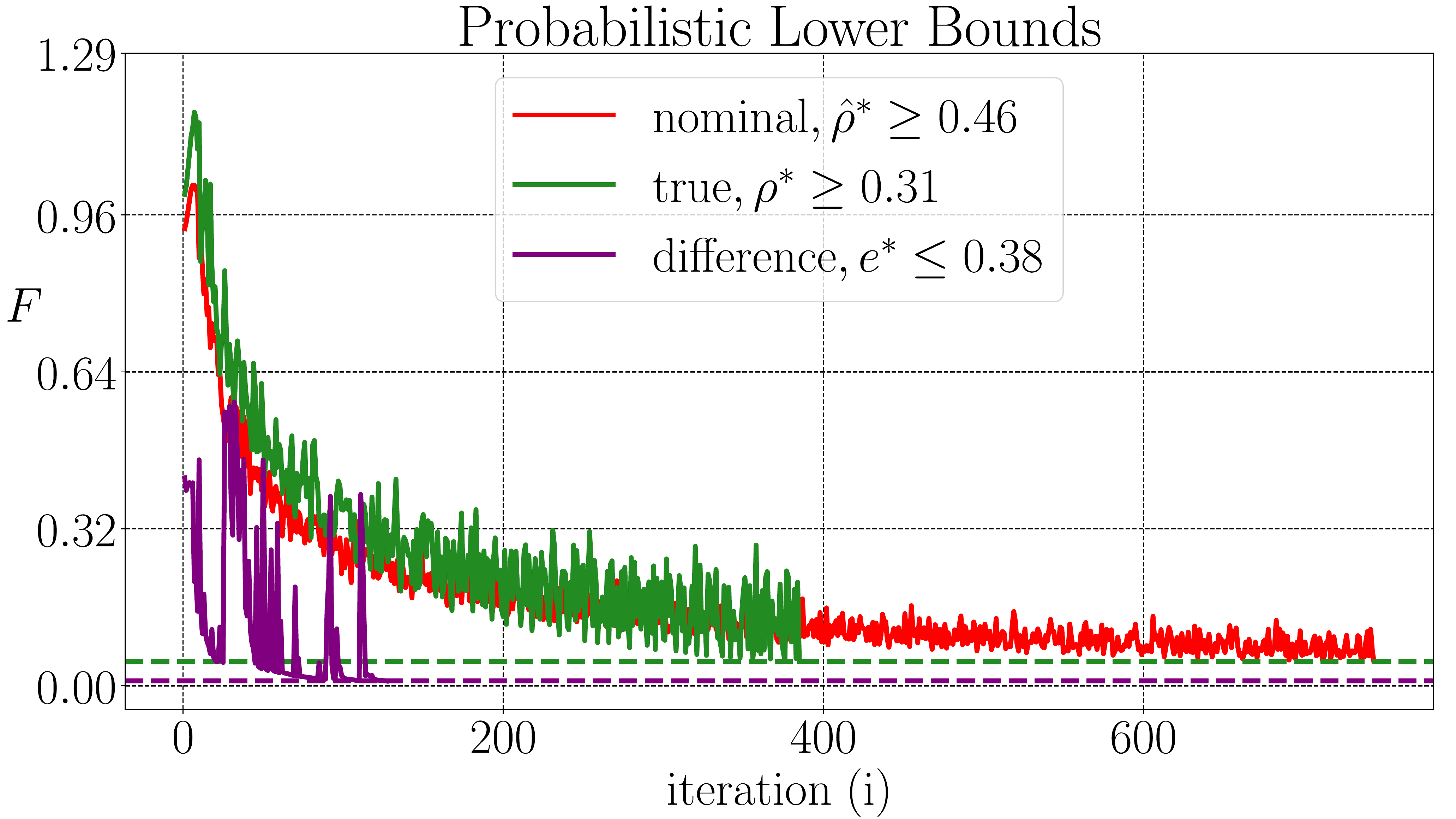}\\
    \gapclose
    \caption{The results of applying Algorithm~\ref{alg:algorithm} to bound the solutions to optimization problems~\eqref{eq:example_true_robustness}-\eqref{eq:example_accuracy}.  The dashed lines of similar colors indicate the tolerance for each algorithm run.  As the tolerances for the runs to solve optimization problems~\eqref{eq:example_true_robustness} and \eqref{eq:example_sim_robustness} were the same, we only see one, dashed line in green.}
    \vspace{-0.2 in}
    \label{fig:experimental_decay}
\end{figure}

As motivated prior though, not only did we want to lower bound $\rho^*$, but we also wished to do so \textit{while minimizing the number of iterations (system tests) required to determine this bound}.  To show that our method achieves this, Figure~\ref{fig:experimental_decay} shows the decay of $F$ while running Algorithm~\ref{alg:algorithm} to lower bound the solutions to optimization problems~\eqref{eq:example_true_robustness}--\eqref{eq:example_accuracy}.  Notice that Algorithm~\ref{alg:algorithm} \textit{required $i^* = 385$ tests to directly solve optimization problem~\eqref{eq:example_true_robustness}, but via our proposed method, we could lower bound $\rho^*$ in $i^* = 128$ tests by solving optimization problems~\eqref{eq:example_sim_robustness} and~\eqref{eq:example_accuracy} instead}.  As a result, our proposed method \textit{required $257$ fewer tests of the true system to construct a lower bound to the true-system risk measure $\rho^*$ as compared to a direct Bayesian Optimization testing scheme~\cite{ghosh2018verifying,gangopadhyay2019identification}.}

\section{Conclusion and Future Work}
The authors first developed a Bayesian Optimization Algorithm that identifies minimal upper bounds $\epsilon$ to the maximum of a function satisfying a set of assumptions.  Then, the authors used this Algorithm to construct a lower bound for a true system, robustness risk measure by solving two, separate optimization problems over the system simulator.  Finally, we showed that this procedure generates a lower bound that is emblematic of true system behavior, while minimizing the number of true system tests required to achieve that bound.  However, the generated bound is conservative and is restricted to a specific class of STL specifications.  In future work, the authors hope to extend the class of STL specifications accountable via our procedure and decrease the conservativeness of the resulting bound as well.
   
\renewcommand{\baselinestretch}{0.97}

\bibliographystyle{ieeetr}
\bibliography{bib_works}

\end{document}